\documentclass[11pt]{article}

\usepackage{amsmath, amssymb, amsthm}
\usepackage{fullpage}
\usepackage{graphics}
\usepackage{environ}
\usepackage{framed}
\usepackage{url}
\usepackage{algorithm}
\usepackage[noend]{algpseudocode}
\usepackage[labelfont=bf]{caption}
\usepackage{cite}
\usepackage{framed}
\usepackage[framemethod=tikz]{mdframed}
\usepackage{appendix}
\usepackage{graphicx}
\usepackage[textsize=tiny]{todonotes}
\usepackage{tcolorbox}

\newcommand\remove[1]{}

\theoremstyle{plain}
\newtheorem{lemma}{Lemma}[section]
\newtheorem*{lemma*}{Lemma}
\newtheorem{corollary}[lemma]{Corollary}
\newtheorem*{corollary*}{Corollary}

\theoremstyle{definition}
\newtheorem{theorem}[lemma]{Theorem}
\newtheorem{itheorem}[lemma]{Informal Theorem}
\newtheorem*{theorem*}{Theorem}
\newtheorem{definition}[lemma]{Definition}

\newtheorem{prob}[lemma]{Problem}
\newtheorem*{rem*}{Remark}

\newif\ifrandom
\randomtrue

\newcommand{\poly}{{\rm poly}}

\author{Ofer Grossman \\ MIT \\ ofer.grossman@gmail.com \and Yang P. Liu \\ MIT \\ yangpatil@gmail.com}

\begin{document}
%\pagenumbering{gobble}

\title{Reproducibility and Pseudo-Determinism in Log-Space}

\begin{titlepage}

\clearpage\maketitle
\thispagestyle{empty}

\begin{abstract}
A curious property of randomized log-space search algorithms is that their outputs are often longer than their workspace. This leads to the question: how can we reproduce the results of a randomized log space computation without storing the output or randomness verbatim? Running the algorithm again with new random bits may result in a new (and potentially different) output. 

We show that every problem in search-RL has a randomized log-space algorithm where the output can be reproduced.  Specifically, we show that for every problem in search-RL, there are a pair of log-space randomized algorithms $A$ and $B$ where for every  input $x$, $A$ will output some string $t_x$ of size $O(\log n)$, such that $B$ when running on $(x, t_x)$ will be pseudo-deterministic: that is, running $B$ multiple times on the same input $(x, t_x)$ will result in the same output on all executions with high probability. Thus, by storing only $O(\log n)$ bits in memory, it is possible to reproduce the output of a randomized log-space algorithm.

An algorithm is reproducible without storing \textit{any} bits in memory (i.e., $|t_x|=0$) if and only if it is pseudo-deterministic. We show pseudo-deterministic algorithms for finding paths in undirected graphs and Eulerian graphs using logarithmic space. Our algorithms are substantially faster than the best known deterministic algorithms for finding paths in such graphs in log-space.

The algorithm for search-RL has the additional property that its output, when viewed as a random variable depending on the randomness used by the algorithm, has entropy $O(\log n)$.

\end{abstract}

\end{titlepage}

\newpage

\section{Introduction}

\subsection{Reproducible Outputs}
When using a log-space machine to perform a randomized search algorithm with a polynomial-sized output, the output cannot be fully stored. Running the algorithm again with new random bits may result in a new (and potentially different) output. Hence, after running the computation, we lose access to the outputted answer, and are unable to reproduce it.

Consider, for example, the following simple computational problem: Given a (directed) graph $G$ and two vertices $s$ and $t$ such that a random walk from $s$ hits $t$ quickly with high probability, output two copies of the same path from $s$ to $t$. That is, the goal is to output some path $P$, and then output the same path $P$ again. It's not clear how to perform the above in randomized log-space, since after outputting some path $P$, it's not clear how to reproduce $P$ and be able to output it again. So, although outputting a single path is easy, or two potentially different paths, it's not clear how to output the same path twice.

Another example of this phenomenon in play is that it is known that there is a randomized reduction from NL to UL (in fact, NL is reducible to UL $\cap$ coUL) \cite{RA, GW}. It follows that if UL $\subseteq$ RL, then NL can be solved by randomized log-space algorithms with \textit{two-way access} to the random bits (that is, there is a randomized disambiguation of $NL$ which uses two-way access to the random bits). However, when assuming UL $\subseteq$ RL, it is not known whether NL can be solved by a randomized log-space algorithm with \textit{one-way} access to the random bits. The two-way access to the random bits is needed so that the output of the reduction (which is an instance of a problem in UL) can be accessed in a two-way fashion. If the output of the reduction was reproducible, then one-way access to the random bits would suffice.

One way to achieve reproducibility is through \textit{pseudo-determinism}. Pseudo-deterministic algorithms are randomized search algorithms which, when run on the same input multiple times, with high probability output the same result on all executions. Given such an algorithm, it is possible to reproduce outputs: simply run the algorithm again using new randomness. We manage to achieve reproducibility using a different and novel approach which does not involve finding a pseudo-deterministic algorithm for the problem.

\subsection{Our Contribution}

Our contribution falls into two parts: contributions to reproducibility in the context of log-space, and contributions to pseudo-determinism in the context of log-space.

\paragraph{Reproducibility:} We introduce the notion reproducibility and provide a definition in Section \ref{sec:alg}. Our main result shows that every problem in search-RL (see Section \ref{sec:prelim} for a definition of search-RL) can be solved so that its output is reproducible. By reproducible, we essentially mean that the algorithm will be able to generate many identical copies of its output using only $O(\log n)$ memory. Then, it effectively has two-way access to the output, instead of only one-way access, which is the case with standard search-RL algorithms.

In order to achieve reproducibility, we show that for every problem in search-RL there is some randomized log-space algorithm $A$ such that with high probability, the output of $A$ only depends on the first $O(\log n)$ random bits $A$ samples. That is, after sampling the first $O(\log n)$ random bits, with high probability for most choices of the rest of the random bits used by the algorithm, the same result will be outputted. This property allows the algorithm reproduce the output, as it can store the first $O(\log n)$ random bits it sampled in memory, and use them to recreate the output. Since the algorithm can find and store the information needed to reproduce the answer, we say that the output is \emph{reproducible}.

Our first result is that every problem in search-RL (as defined in \cite{RTV}) has a randomized log-space algorithm whose output, with high probability, only depends on its first $O(\log n)$ random bits. This implies that every problem in search-RL has reproducible solutions.

\begin{itheorem}
\label{infpsdlintro}
Every problem in search-RL has a randomized log-space algorithm whose output, with high probability, only depends on its first $O(\log n)$ random bits.
\end{itheorem}

A more precise statement is given in Section \ref{sec:alg} as Theorem \ref{psdlintro}. The algorithm we present has several other noteworthy properties, which we discuss in Subsection \ref{entropy}. This includes that the output of the algorithm, when viewed as a random variable depending on the random choices used by the algorithm, has entropy $O(\log n)$. This is significantly lower than a standard search-RL algorithm, which may have polynomial entropy.

%Our algorithm has an additional property: if one fixes the first $O(\log n)$ random bits sampled by the algorithm, the algorithm will output the same solution with high probability over the rest of the random bits. This property implies that all random bits other than the first $O(\log n)$ random bits still ``look random'' to an outside observer, even after seeing the algorithm's output. That is, the output leaks very little information about those random bits. Hence, if one adaptively queries our algorithm $k$ times in a row, our algorithm can reuse all but $O(\log n)$ of the random bits for future queries, without a substantial decrease in success probability.

\paragraph{Pseudo-determinism:} In later sections, we show faster pseudo-deterministic algorithms for finding paths in undirected and Eulerian graphs. These algorithms are reproducible even without storing $O(\log n)$ bits in memory.

For undirected graphs, a deterministic log-space algorithm has been shown by Reingold \cite{reingold}. One of the drawbacks of this algorithm is that its runtime, while polynomial, has a very large exponent, since it requires going over all paths of length $O(\log n)$ on a certain graph, with a large constant hidden in the $O$ (for certain expositions of the algorithm, the polynomial runtime is larger than $O(n^{10^9})$). This can likely be improved, but we imagine it would be difficult to lower it to a ``reasonable'' polynomial time complexity. We show a pseudo-deterministic algorithm for the problem which runs in the more reasonable time of $\tilde{O}(mn^3)$:

\begin{theorem}[Pseudo-deterministic Undirected Connectivity in $\tilde{O}(mn^3)$ time, $O(\log n)$ space]
\label{slpsd}
Let $G$ be a given undirected graph with $n$ vertices and $m$ edges. Given two vertices $s$ and $t$ of $G$ which are connected, there is a pseudo-deterministic log-space algorithm which outputs a path from $s$ to $t$. Furthermore, the algorithm runs in time $\tilde{O}(mn^3)$.
\end{theorem}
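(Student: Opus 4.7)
The plan is to output the canonical lex-smallest shortest path from $s$ to $t$, a deterministic function of $(G, s, t)$. An algorithm producing this path with high probability is pseudo-deterministic by definition. The outer routine will be a greedy walk from $s$: at the current vertex $u$, I would compute $d_G(u, t)$, then iterate over the neighbors $w$ of $u$ in identifier order and compute $d_G(w, t)$ for each, moving to (and outputting) the first $w$ satisfying $d_G(w, t) = d_G(u, t) - 1$. The walk terminates upon reaching $t$.

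For the distance computations I would build a subroutine $\textsc{Dist}(v)$ that returns $d_G(v, t)$ by binary searching over $k \in \{0, 1, \ldots, n - 1\}$ and invoking an inner bounded-distance connectivity subroutine $\textsc{Reach}(v, k)$ that decides whether $d_G(v, t) \le k$. If $\textsc{Reach}$ runs in $\tilde O(mn)$ randomized log-space with inverse-polynomial error, then $\textsc{Dist}$ runs in $\tilde O(mn)$ time (absorbing the logarithmic binary-search overhead). Since $d_G(v, t)$ is a deterministic function of the input, high-probability correctness of $\textsc{Reach}$ makes $\textsc{Dist}$ behave deterministically on the good event.

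The time and correctness analyses are then routine. The total number of distance queries made over the entire traversal is $\sum_{u \in P} \deg_G(u) + |P| \le 2m + n = O(m)$, where $P$ is the output path. Multiplying by $\tilde O(mn)$ per query gives total time $\tilde O(m^2 n)$, which is at most $\tilde O(mn^3)$ since $m \le n^2$. A union bound over the $O(m)$ queries, each failing with inverse-polynomial probability, shows that the algorithm outputs precisely the canonical shortest path with high probability, which is exactly what pseudo-determinism requires. The space bound is clear since the outer walk only tracks the current vertex, a neighbor index, and the results of subroutine calls, all of which fit in $O(\log n)$ bits.

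The hard part will be implementing $\textsc{Reach}(v, k)$ in $\tilde O(mn)$-time randomized log-space. The classical random-walk algorithm for undirected $s$-$t$ connectivity does not directly supply bounded-distance information, and a naive layered-graph reduction yields a directed reachability problem that is not known to lie in RL. I would attempt to address this by peeling BFS layers outward from $t$, using doubling to reduce the recursion depth to $O(\log n)$, and recomputing intermediate quantities on demand so as to maintain $O(\log n)$ workspace rather than storing intermediate layers. The remainder of the proof should then follow from the outer greedy walk and the union-bound argument above.
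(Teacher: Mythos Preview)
Your overall framework is sound: outputting a canonical object (lex-first shortest path) by making only ``decision'' subroutine calls whose answers are uniquely determined by the input is exactly the right way to get pseudo-determinism. The outer loop, the space analysis, and the union-bound wrap-up are all fine.

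The gap is precisely where you yourself flag it: the subroutine $\textsc{Reach}(v,k)$, i.e.\ bounded-distance connectivity in undirected graphs, in randomized log-space and $\tilde O(mn)$ time. You do not actually give such an algorithm, and neither of the ideas you sketch works. The layered-graph reduction (check whether $(v,0)$ reaches $(t,k)$) produces a \emph{directed} instance, and directed reachability is not known to be in RL; if you make the layer edges undirected, the walk can move back and forth between layers and you lose the distance information entirely. Your ``doubling'' idea --- decide $d(v,t)\le 2^j$ by guessing a midpoint $w$ with $d(v,w)\le 2^{j-1}$ and $d(w,t)\le 2^{j-1}$ --- is exactly Savitch's recursion; it has depth $O(\log n)$ but each level must remember its candidate $w$, giving $O(\log^2 n)$ space, not $O(\log n)$. ``Peeling BFS layers from $t$ and recomputing on demand'' has the same problem: membership in layer $k$ depends on membership in layer $k-1$, and unrolling that dependency either stores $\Theta(\log n)$ nested frames or blows up time exponentially. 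Random walks give you connectivity cheaply, but they do not give you shortest-path distances; the hitting time of a walk bears no simple relation to $d_G(v,t)$.

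The paper sidesteps this obstacle entirely: its algorithm never computes a distance. It deletes vertices $1,2,\ldots$ from $G$ (other than the current endpoints) and uses the ordinary random-walk \emph{connectivity} test (Lemma~\ref{randconnect}) to detect the first $k$ whose removal separates $v_{cur}$ from $v_{dest}$; that $k$ becomes the new intermediate target. Because every query is a plain yes/no connectivity question on an undirected graph, each one is answerable in $\tilde O(mn)$ time and log-space by the standard random-walk argument, and the answer is a deterministic function of the input --- so the whole procedure is pseudo-deterministic. The ``destination sequence'' potential argument shows the walk never revisits a vertex, yielding the $\tilde O(mn^3)$ bound. The key lesson is that connectivity, not distance, is the primitive you get for free in randomized log-space; the algorithm has to be designed around that.
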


We then generalize the theorem to Eulerian graphs (directed graphs where each vertex has indegree equal to its outdegree). Finding paths in such graphs deterministically has been shown in \cite{RTV}. Once again, the algorithm given in \cite{RTV} suffers from a very large polynomial runtime.

\begin{theorem}[Connectivity in Eulerian graphs in in $\tilde{O}(m^5n^3)$ time, $O(\log n)$ space]
\label{eulerpsd}
Let $G$ be a given Eulerian graph with $n$ vertices and $m$ edges. Given two vertices $s$ and $t$ of $G$ such that there is a directed path from $s$ to $t$, there is a pseudo-deterministic log-space algorithm which outputs a path from $s$ to $t$. Furthermore, the algorithm runs in time $\tilde{O}(m^5n^3)$.
\end{theorem}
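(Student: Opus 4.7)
The plan is to adapt the pseudo-deterministic algorithm underlying Theorem \ref{slpsd} from the undirected setting to the Eulerian setting. The key structural fact I would rely on is that random walks on Eulerian graphs inherit most of the good properties of walks on undirected graphs: the stationary distribution assigns mass proportional to the (out-)degree, so it is balanced, and hitting times between any two vertices in the same strongly connected component are polynomial in $n$ and $m$. This already yields a randomized log-space algorithm for Eulerian connectivity by simulating a walk from $s$; the remaining work is to make the output path pseudo-deterministic rather than dependent on the randomness.

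The next step is to design a pseudo-deterministic next-vertex procedure analogous to the one for undirected graphs. Given the current vertex $v$, the procedure picks a canonical out-neighbor $w$ by estimating, for each out-neighbor, a well-separated quantity such as the probability that a random walk of suitable length starting at that out-neighbor reaches $t$ before returning to $v$, or an analogous hitting-probability ratio. These quantities take only polynomially many distinct rational values determined by the graph, and can be estimated to $1/\mathrm{poly}(m,n)$ accuracy by averaging many independent walks, each of which is simulated in $O(\log n)$ space using a step counter. Selecting the out-neighbor that maximizes the estimate (breaking ties lexicographically) yields a pseudo-deterministic choice function, and iterating it from $s$ produces a single fixed path with high probability, via a union bound over all steps.

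The main obstacle I expect is the loss of the reversibility and electrical-network symmetry available in the undirected case. Without these, one has to work harder to show that the quantity being estimated at each step is sufficiently well-separated across the out-neighbors of $v$; otherwise the Monte Carlo estimator cannot reliably single out a canonical choice. I would therefore spend most of the technical effort on (a) a quantitative bound showing that for distinct out-neighbors $w, w'$ the relevant hitting probabilities differ by at least $1/\mathrm{poly}(m,n)$, and (b) an upper bound of roughly $\tilde O(m^2 n)$ or $\tilde O(m^3)$ on Eulerian hitting and mixing times, presumably recoverable from the analysis of \cite{RTV} or from standard spectral arguments. These are the facts that push the overall cost from the undirected $\tilde O(mn^3)$ up to $\tilde O(m^5 n^3)$, the extra $m^4$ overhead coming from the longer walks and the larger number of samples needed to reliably distinguish out-neighbors in the Eulerian setting.

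Putting the pieces together, the full algorithm will: (i) start at $s$ and maintain only the current vertex, a step counter, and $O(\log n)$ bits of scratch space for the subroutine; (ii) call the pseudo-deterministic next-vertex procedure at each step; (iii) halt when the current vertex equals $t$. Pseudo-determinism follows because, with high probability, the next-vertex function is fixed across independent executions, so the entire output path is identical; log-space is immediate from the ingredients; and the runtime bound follows by multiplying the path length, the per-step estimator cost, and the amplification needed to union-bound over all steps.
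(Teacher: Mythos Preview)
Your proposal has a genuine gap: the separation claim (a) that you plan to rely on is simply false. For any two out-neighbors $w,w'$ of a vertex $v$, there is no reason the hitting probabilities to $t$ (or any similar Monte-Carlo-estimable quantity) must differ by at least $1/\mathrm{poly}(m,n)$; a graph with an automorphism swapping $w$ and $w'$ while fixing $t$ gives exactly equal values, and without symmetry the gap can still be arbitrarily small or zero. When the true values are within your estimation error, different random runs will rank the out-neighbors differently, so your ``canonical'' choice is not canonical and the algorithm is not pseudo-deterministic. This is precisely the obstruction discussed earlier in the paper (Section~\ref{sec:alg}): comparing an estimate to a fixed target fails at the boundary, and the fix there---a random threshold---yields only $O(\log n)$ influential bits, not full pseudo-determinism. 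A secondary issue is termination: iterating a fixed next-vertex function on a directed graph can cycle, and your outline gives no argument that the walk reaches $t$ rather than entering a loop that avoids it.

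The paper's proof takes a completely different route that avoids estimating any continuous quantity. It generalizes the undirected algorithm of Theorem~\ref{slpsd}, but since deleting vertices destroys the Eulerian property, it instead deletes \emph{cycles}: a log-space-computable permutation $f$ on edges (matching in-edges to out-edges at each vertex) decomposes $E$ into canonical cycles $C_1,C_2,\ldots$, and deleting any subset of these keeps the graph Eulerian. The algorithm removes cycles one by one, uses the randomized connectivity test of Lemma~\ref{randconnect} to detect when $s$ and $t$ become disconnected, and then uses the offending cycle to advance $v_{cur}$. Because randomness is used only for a decision problem with a unique correct answer, pseudo-determinism is automatic; the combinatorial ``destination sequence'' argument shows $v_{cur}$ never repeats. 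The $\tilde O(m^5 n^3)$ runtime comes from the cost of checking, at each step of each random walk, whether the current edge lies in one of the deleted cycles.
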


\subsection{Related Work}
%preserving randomness for adaptive algorithms
%impagliazzo zuckerman reusing random bits
%saks zhou
%reingold stuff
%RTV eulerian grpahs
\paragraph{Pseudo-determinism:}
The study of pseudo-determinism was initiated by Gat and Goldwasser \cite{GG}. Pseudo-deterministic algorithms have been studied for the problem of finding primitive roots modulo primes \cite{roots}, for finding perfect perfect matchings in parallel on bipartite graphs \cite{matching}, in the context of sublinear time algorithms \cite{GGR}, and in the context of interactive proofs \cite{proofs}. Furthermore, general theorems regarding the existence of pseudo-deterministic subexponential algorithms have been shown in \cite{OS, dhiraj}. In this work, we initiate the study of pseudo-determinism in the context of low space computation.

\paragraph{RL vs L:}
Related to our result on pseudo-deterministic undirected connectivity is the work of Reingold, which showed that undirected connectivity can be solved \textit{deterministically} with logarithmic space \cite{reingold}. Later, this result was extended to find pseudo-random walks on Eulerian graphs by Reingold, Trevisan, and Vadhan \cite{RTV}.

One of our techniques may remind some readers of the work of Saks and Zhou that show that problems in BPL can be solved deterministically using $O(\log^{3/2}(n))$ space \cite{SZ}. In \cite{SZ}, the authors add random noise to certain computed matrices, in order to be able to reuse certain random bits. In this work, we pick a certain `threshold' at random, and this allows us to reuse randomness (more accurately, it makes our output be pseudo-deterministic with respect to certain random bits). The two ideas are similar in that they use randomization in an unconventional way in order to make the output not depend on certain random bits (for Saks and Zhou, this was helpful since it allowed those random bits to be reused).

\section{Preliminaries}\label{sec:prelim}

In this section we establish some definitions and lemmas that will be useful in later parts of the paper. Many of our definitions, especially those related to search problems in the context of log-space, follow closely to the definitions in \cite{RTV}.

We begin by defining a search problem.
\begin{definition}[Search Problem]
A \textit{search problem} is a relation $R$ consisting of pairs $(x, y)$. We define $L_R = \{x | \exists y$ s.t. $(x, y) \in R\}$, and $R(x) = \{y|(x, y) \in R\}$.
\end{definition}
The computational task associated with a search problem $R$ is: given $x$, find a $y$ such that $(x, y) \in R$. From this point of view, $L_R$ corresponds to the set of valid inputs, and $R(x)$ corresponds to the set of valid outputs on input $x$.

We now define a pseudo-deterministic algorithm. Intuitively speaking, a pseudo-deterministic algorithm is a randomized search algorithm which, when run multiple times on the same input (using different random strings), results in the same output with high probability.
\begin{definition}[Pseudo-deterministic]
A randomized search algorithm $A$ is \textit{pseudo-deterministic} if for all valid inputs $x$,
$$\Pr_{r_1, r_2}(A(x, r_1) = A(x, r_2)) \ge 2/3.$$
\end{definition}
We note that through repetition, the $2/3$ in the above definition can be amplified.

We now define classes of search problems in the context of log-space. Our definitions follow closely to those of \cite{RTV}.

\begin{definition}[Log-space search problem]
A search problem $R$ is \textit{log-space} if there is a polynomial $p$ such that if $y \in R(x)$ then $|y| \le p(|x|)$ and there is a deterministic log-space machine can decide if $(x, y) \in R$ with two-way access to $x$ and one-way access to $y$.
\end{definition}

We now define the class search-L. We remind the reader that a transducer is a Turing machine with a read-only input tape, a work tape (in our case, of logarithmic size), and a write-only output tape.

\begin{definition}[search-L]
A search problem $R$ is in \textit{search-L} if it is log-space and if there is a logarithmic space transducer $A$ such that $A(x) \in R(x)$ for all $x$ in $L_R$.
\end{definition}

\begin{definition}[search-RL]
A search problem $R$ is in \textit{search-RL} if it is log-space and if there is a randomized logarithmic space transducer $A$ and polynomial $p$ such that $\Pr_r[A(x, r) \in R(x)] \ge \frac{1}{p(|x|)}$ for all $x \in L_R$.
\end{definition}

The following computational problem is complete for search-RL:

\begin{definition}[\textsc{Short-Walk Find Path}]
\label{swfp}
Let $R$ be the search problem whose valid inputs are $x = (G, s, t, 1^k)$ where $G$ is a directed graph, $s$ and $t$ are two vertices of $G$, and a random walk of length $k$ from $s$ reaches $t$ with probability at least $1 - 1/|x|$ (where $|x|$ represents the length of the input $x$). On such an $x$, a valid output is a path of length up to $\poly(k)$ from $s$ to $t$.
\end{definition}

%Note that we can amplify $\frac{1}{2}$ to any constant less than $1$.
\begin{lemma}
\label{completesearchrl}
\textsc{Short-Walk Find Path} is complete for search-RL.
\end{lemma}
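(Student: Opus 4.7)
The plan is a standard two-step completeness argument: first place \textsc{Short-Walk Find Path} (SWFP) in search-RL, then reduce an arbitrary $R\in$ search-RL to it by building a configuration graph.

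For \textbf{membership}, on input $(G,s,t,1^k)$ I would simulate a length-$k$ random walk from $s$, keeping only the current vertex and a step counter in $O(\log n)$ space. By hypothesis the walk reaches $t$ with probability at least $1-1/|x|$. To emit exactly a path ending at $t$ on the one-way output tape, I would first determine, by re-simulating with the same random bits, the first step at which $t$ is visited, then re-simulate once more and stream out only that prefix; each simulation uses log space and the output length is at most $k$, which is polynomial in $|x|$.

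For \textbf{hardness}, fix $R\in$ search-RL witnessed by a randomized log-space transducer $A$ of running time $T=\poly(|x|)$ and success probability $\ge 1/p(|x|)$. I would construct the configuration graph of $A$ on $x$: vertices are tuples (workspace contents, input-head position, output-head position, state), with the random-tape head omitted since randomness is consumed one-way, and each non-halting vertex has two out-neighbors labeled by the next random bit. Let $s'$ be the start configuration and collapse every halting configuration whose written output $R$-relates to $x$ (checkable in log space since $R$ is a log-space relation) into a single sink $t'$. A random walk on this graph is exactly an execution of $A$, so it reaches $t'$ with probability $\ge 1/p(|x|)$. To meet SWFP's $1-1/|x|$ threshold I would chain $N=\Theta(p(|x|)\log|x|)$ independent copies of the graph in series, routing rejecting halting configurations of copy $i$ to the start of copy $i{+}1$ and every accepting one to the shared $t'$; then with $k=NT$, a length-$k$ walk from $s'$ hits $t'$ with probability $\ge 1-(1-1/p(|x|))^N \ge 1-1/|x|$. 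Since vertices and transitions are each described by $O(\log n)$ bits, the whole mapping $x \mapsto (\text{graph},s',t',1^k)$ is log-space computable.

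To finish the reduction I need to recover a valid $y\in R(x)$ from the path $\pi$ returned by SWFP. I would scan $\pi$ once and, for each edge, feed the source configuration and the random bit witnessed by that edge into $A$'s transition function to determine the character (if any) written on that step, emitting those characters in order. Since each output symbol is determined locally by one edge and one source configuration, this works in log space with one-way access to $\pi$ and one-way output, matching the access model for log-space search reductions. The only point that needs care is verifying that a uniformly random length-$k$ walk on the amplified configuration graph really does mirror the amplified execution of $A$, including across copy boundaries; the rest is routine bookkeeping.
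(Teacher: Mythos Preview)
Your route differs from the paper's: they reduce from \textsc{Poly-Mixing Find Path}, already shown search-RL-complete in \cite{RTV}, via a simple layered amplification on the input graph; you attempt a direct reduction from an arbitrary $R$ through configuration graphs. The direct route is viable in principle, but two steps as written do not go through.

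For membership, ``re-simulating with the same random bits'' is not available in RL: randomness is read one-way, and the $k$ bits consumed are too many to store in log space. This is a minor slip---just stream the walk and stop the first time you hit $t$. The substantive gap is in hardness. You propose to collapse into $t'$ ``every halting configuration whose written output $R$-relates to $x$,'' calling this checkable in log space. But the written output is \emph{not} part of the configuration you defined (workspace, head positions, state); from a halting configuration alone you cannot recover $y$, let alone test $(x,y)\in R$. The standard repair uses precisely the one-way-in-$y$ clause of the log-space relation: build a new machine $A'$ that runs $A$ and the verifier for $R$ in lockstep, feeding each symbol $A$ emits directly to the verifier, so that $A'$ halts in a genuine accept/reject state; then take the configuration graph of $A'$. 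Without this composition your $t'$ is not well-defined, and with it you must also revisit output recovery: with one-way access to $\pi$ across the chained copies you cannot isolate just the accepting segment's output, so you end up emitting a list of candidate $y$'s, one per traversed copy---which is exactly what the \cite{RTV} reduction notion (and the paper's own proof) allows.
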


We prove the above lemma in Appendix \ref{complete}, via a reduction from \textsc{Poly-Mixing Find Path}, which was shown to be complete for search-RL in \cite{RTV}.

Before going on to the algorithm in Section \ref{sec:alg}, we make a definition to simplify the explanations.

\begin{definition}\label{def:pk} For a graph $G$ with vertices $s$ and $t$, and a positive integer $k$, let $p_k(s, t)$ denote the probability that a random walk of length $k$ starting from $s$ goes through $t$.
\end{definition}

One of the key lemmas used by the algorithm is that one can estimate the value of $p_k(s, t)$ up to some polynomial additive error in search-RL. To do so, we simulate polynomially many random walks starting at $s$ and count the fraction that pass through $t$. This is made precise in the following lemma:

\begin{lemma}
\label{estprob}
Consider a graph $G$ with $n$ vertices, two of which are $s$ and $t$. Let $k$ be a positive integer. Then there exists a randomized log-space algorithm that on input $(G, s, t, 1^k)$ outputs an estimate $\mu$ for $p_k(s, t)$ satisfying $|\mu - p_k(s, t)| \le \frac{1}{k^5n^5}$ with probability at least $1 - 2e^{-2kn}$.
\end{lemma}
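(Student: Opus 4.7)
The plan is a standard empirical estimation argument implemented carefully in log-space. Let $X_1, X_2, \dots, X_N$ be independent indicators where $X_i = 1$ if the $i$th simulated random walk of length $k$ starting from $s$ passes through $t$, and $X_i = 0$ otherwise. Each $X_i$ is a Bernoulli random variable with mean exactly $p_k(s, t)$, so setting $\mu := \frac{1}{N}\sum_{i=1}^N X_i$ and applying Hoeffding's inequality yields
\[
\Pr\!\left[\,|\mu - p_k(s, t)| > \varepsilon\,\right] \le 2e^{-2N\varepsilon^2}.
\]
With $\varepsilon = \frac{1}{k^5 n^5}$, we want the right-hand side to be at most $2e^{-2kn}$, so it suffices to choose $N \ge k^{11} n^{11}$. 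Taking $N := k^{11} n^{11}$ (or any convenient polynomial in $k$ and $n$) therefore satisfies the required tail bound.

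Next I would describe how to implement this in logarithmic space. The input $x = (G, s, t, 1^k)$ has length at least $k + n$, so $O(\log(kn))$ bits of workspace is logarithmic in $|x|$. The algorithm maintains the following counters and registers, each of size $O(\log(kn))$: (i) an outer counter $i$ ranging from $1$ to $N$ tracking which walk is being simulated, (ii) an inner counter $j$ ranging from $1$ to $k$ tracking the step within the current walk, (iii) the current vertex $v$ (initially $s$), (iv) a single bit flag indicating whether $t$ has been visited so far on the current walk, and (v) a success counter $S$ recording the number of walks that have hit $t$. To perform one step of a walk, the algorithm reads the adjacency list of $v$ from the input tape, samples a uniformly random neighbor using $O(\log n)$ fresh random bits, and updates $v$. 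After the inner loop finishes, if the flag is set the algorithm increments $S$, then resets the flag and $v$ and continues with the next walk. Finally it outputs the fraction $\mu = S / N$ (or writes $S$ and $N$, from which $\mu$ can be recovered).

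The main technical obstacle is simply confirming that the required $\varepsilon$ and failure probability are simultaneously achievable with a polynomial number of trials, which the Hoeffding calculation above verifies, and that all counters fit in $O(\log|x|)$ bits, which follows since $|x| \ge k$. Correctness of the probabilistic guarantee is immediate from Hoeffding's inequality applied to the independent Bernoulli variables $X_i$.
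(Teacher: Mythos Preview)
Your proposal is correct and essentially identical to the paper's proof: both simulate $N = k^{11}n^{11}$ random walks, take the empirical fraction hitting $t$, and apply Hoeffding's inequality with $\varepsilon = 1/(k^5n^5)$ to obtain the $1-2e^{-2kn}$ guarantee. If anything, your version is more thorough, since you spell out the log-space bookkeeping (counters, current vertex, hit flag) that the paper leaves implicit.
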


\begin{proof}
To find $\mu$, we simulate many random walks from $s$ of length $k$, and then output the fraction which reach $t$. More precisely, we use the following algorithm: simulate $k^{11}n^{11}$ random walks of length $k$ starting at $s$, and count how many end at $t$. Say that $C$ of them do. Then output $\frac{C}{k^{11}n^{11}}.$ To show that this works, it suffices to note that \[ \Pr\left[\left|\frac{C}{k^{11}n^{11}} - p_k(s, t)\right| \le \frac{1}{k^5n^5} \right] \ge 1 - 2e^{-2kn} \] by Hoeffding's inequality.
\end{proof}

\section{An Algorithm for Search-RL with Reproducible Outputs}\label{sec:alg}

\subsection{Reproducibility}

We begin with a formal definition of a problem with reproducible outputs. Essentially, a problem has reproducible solutions if for every input $x$ we can generate a short string $t_x$ so that given both $x$ and $t_x$ we can keep reproducing copies of the same $y$ satisfying $(x, y) \in R$. That is, by memorizing only the short string $t_x$, we can continue to produce more copies of the same output.

\begin{definition}[Reproducible]\label{reproducible}
We say that a search problem $R$ has \textit{log-space reproducible} solutions if there exist randomized log-space algorithms $A$ and $B$ satisfying the following properties:

\begin{itemize}
\item On input $x$, with high probability, $A$ outputs a string $t_x$ of length $O(\log n)$ such that the second bullet holds.\footnote{Via exhaustive search, it can be shown that if for all inputs $x$ there exists a $t_x$ such that the second bullet holds, there also exists a log-space algorithm $A$ that for all $x$ with high probability will output some $t_x$ satisfying the second bullet.}
\item There exists some $y$ satisfying $(x, y) \in R$ such that with high probability $B$ outputs $y$ when running on input $(x, t_x)$.
\end{itemize}

\end{definition}

Reproducibility is closely related to pseudo-determinism. In the case where $t_x$ is of size 0, the algorithm $B$ is a pseudo-deterministic algorithm for the search problem $R$.

An alternate way to view reproducibility is that a search problem $R$ has log-space reproducible solutions if there exists some randomized log-space algorithm $C$ such that algorithm $C$ can produce two copies of an output $y$ satisfying $(x, y) \in R.$ Essentially, this alternate view captures the fact that a problem has log-space reproducible solutions if and only if we can produce some output, and then produce it again, ensuring that the output was not lost after the first time we computed it.

\begin{lemma}
\label{copies}
A search problem $R$ has log-space reproducible solutions if and only if there exists some randomized log-space algorithm $C$ such that for all valid inputs $x$ (i.e., $x \in L_R$), with high probability $C(x)$ outputs two copies of an output $y$ satisfying $(x, y) \in R$. That is, with high probability $C$ outputs the tuple $(y, y)$, where $(x, y) \in R$.
\end{lemma}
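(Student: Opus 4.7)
Assume $R$ has log-space reproducible solutions via algorithms $A$ and $B$. I would construct $C$ as follows: on input $x$, first run $A(x)$ to produce $t_x$ and store it on the work tape (which fits since $|t_x| = O(\log n)$), then run $B(x, t_x)$ twice in sequence using independent random bits, writing the output of each run to $C$'s output tape. A union bound over the three events ``$A$ produces a good $t_x$,'' ``first run of $B$ outputs $y$,'' and ``second run of $B$ outputs $y$'' shows that $C$ emits $(y, y)$ with high probability.

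\textbf{Plan for the reverse direction ($\Leftarrow$).} Given $C$, the algorithm $A$ will simulate $C$ step by step, discarding $C$'s output, until $C$ has finished emitting the first copy of $y$; at that instant $A$ writes out $C$'s current configuration (work tape contents, input head position, output head position, and finite control state), which totals $O(\log n)$ bits, as $t_x$. I may assume without loss of generality that $C$ emits a locally detectable separator between its two copies so that $A$ can identify the halfway mark on the fly; this is a routine preprocessing of $C$ that does not blow up space. The algorithm $B(x, t_x)$ then reconstructs $C$'s configuration from $t_x$, draws fresh random bits, continues $C$'s execution, and copies whatever $C$ writes from that point forward to its own output tape (this second copy is $C$'s reproduction of $y$).

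\textbf{Correctness of the reverse direction.} I would split $C$'s random tape into a prefix $r_1$ consumed before the separator and a suffix $r_2$ consumed afterward; then the first copy $y_1 = y_1(r_1)$ depends only on $r_1$, while the second copy is a function $y_2(r_1, r_2)$. Setting $p(r_1) := \Pr_{r_2}[y_2(r_1, r_2) = y_1(r_1)]$, the hypothesis on $C$ gives $\mathbb{E}_{r_1}[1 - p(r_1)] \le \epsilon$ for some small $\epsilon$. By Markov, with probability at least $1 - \sqrt{\epsilon}$ over $r_1$ one has $p(r_1) \ge 1 - \sqrt{\epsilon}$; for any such $r_1$, taking $y := y_1(r_1) \in R(x)$ witnesses the second bullet of Definition \ref{reproducible} for the $t_x = t_x(r_1)$ that $A$ emitted. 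Since $A$'s randomness is exactly $r_1$ and $B$'s randomness is the fresh $r_2$, this verifies both bullets.

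\textbf{Main obstacle.} The delicate point is matching the ``high probability'' thresholds on the two sides: the Markov step degrades an error $\epsilon$ on $C$ into error $\sqrt{\epsilon}$ on each of $A$ and $B$, so I need to take ``high probability'' to mean $1 - 1/\poly(n)$ in both definitions and to argue that $C$ can be amplified accordingly (or, going in the other direction, that $B$ can be amplified by independent repetition before the comparison inside $C$). The secondary issue, confirming that $C$ may be preprocessed to emit a separator without space blow-up, is routine. Once these are in place, the two constructions are inverse to each other up to polynomial loss in the success probability, giving the claimed equivalence.
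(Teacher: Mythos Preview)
Your proposal is correct and follows essentially the same approach as the paper: for the forward direction, run $A$, store $t_x$, and run (an amplified version of) $B$ twice; for the reverse direction, let $A$ simulate $C$ up to the separator and output $C$'s configuration, and let $B$ resume $C$ from that configuration with fresh randomness. Your Markov-based correctness argument for the reverse direction is actually more careful than the paper's, which simply asserts the conclusion, while the paper is more explicit than you are about the bit-by-bit majority amplification of $B$ in the forward direction; otherwise the two proofs coincide.
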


\begin{proof}

First we show that every search problem $R$ with log-space reproducible solutions has a randomized log-space algorithm $C$ that given some $x$, with high probability outputs two copies of an output $y$ satisfying $(x, y) \in R$. Let $A$ and $B$ be the algorithms for problem $R$ from Definition \ref{reproducible}. We now show how to amplify algorithm $B$  so that the probability it outputs $y$ is $1 - \frac{1}{4n^2}$. We do this by determining the $i$-th bit of the output for all $1 \le i \le |y|, $ where $|y|$ denotes the length of the output $y$. More specifically, consider the algorithm $B'$ that loops through all $i$ such that $1 \le i \le |y|$, and for each index $i$, runs $B$ at least $\Omega(\log(2n|y|))$ times to determine the most common bit in that position. Because there exists an input $y$ such that $\Pr_r[B(x, t_x, r) = y] \ge \frac23$, the most common bit in each position will be the same as the bit of $y$ in that position. Therefore, (after choosing a large enough constant in the $\Omega$) by a Chernoff bound and a union bound over all bits in $y$, $B'$ will output $y$ with probability at least $1 - \frac{|y|}{8n^3|y|^3}.$ Now, an algorithm $C$ for $R$ can do the following: first run $A$ to get $t_x$, and then run algorithm $B'$ two times. By a union bound, with high probability, the output will be $y$ both times, as the failure probability is bounded by $\frac{2|y|}{8n^3|y|^3},$ so the success probability is high.

Now we show the reverse direction. Consider an algorithm $C$ such that with high probability on input $x$, $C$ will output two copies of an output $y$ satisfying $(x, y) \in R.$ Now we construct algorithms $A$ and $B$ satisfying the conditions of Definition \ref{reproducible}. First, let algorithm $A$ simulate algorithm $C$, and output the state of the Turing machine corresponding to algorithm $C$ after $C$ has outputted one copy of $y$ (that is, after it has otputted the comma between the two $y$'s in $(y, y)$). This will be our string $t_x.$ The length of $t_x$ will be of size $O(\log n)$ as $C$ is a log-space algorithm. Now, algorithm $B$ will continue simulating algorithm $C$, starting from state $t_x.$ With high probability, $C$ will output another copy of $y$ after reaching state $t_x$, since we know which high probability $C$ outputs the pair same output twice. Therefore, algorithm $B$ will output $y$ with high probability, as desired.
\end{proof}

We note that if a problem $R$ has reproducible solutions, then for any polynomially bounded $\ell$ it has a randomized logspace algorithm $D$ which on valid input $x$ outputs $\ell$ copies of a valid output $y$. That is, it outputs $(y, y, \ldots, y)$, where $(x, y) \in R$. This can be done by first running the algorithm $A$ (from Definition \ref{reproducible}) to create an advice string $t_x$, and then running algorithm $B$ (from Definition \ref{reproducible}) $\ell$ times using the same advice string $s$ on all those $\ell$ executions.

\paragraph{A justification for the definition of reproducibility:} One may argue that the definition proposed for reproducibility is highly structured, and that there may be ways to construct algorithms which capture the notion of reproducibility without adhering to the structure of Definition \ref{reproducible}. To argue that Definition \ref{reproducible} captures the ``true" notion of reproducibility, we can consider a ``weak'' definition of reproducibility, and show that it is equivalent to the strict Definition \ref{reproducible}. Since the ``weak'' and strict versions of the definition can be shown to be equivalent, we know that we must have captured the notion of reproducibility which must lie between those two definitions.

We note that the ``weakest'' possible notion for reproducibility is that an algorithm has reproducible solutions if there is some randomized logspace algorithm $C$ which outputs the same valid output $y$ twice. This is since if such an algorithm doesn't exist, then there is no hope to achieve any sort of reproducibility, since we essentially can't even reproduce the output a single time. Lemma \ref{copies} shows that this weak notion is equivalent to Definition \ref{reproducible}, showing that the seemingly too-strict definition (Definition \ref{reproducible}) is equivalent to the seemingly too weak definition (that a problem has reproducible outputs if there exists a $C$ which outputs the same valid output $y$ twice), demonstrating that the correct notion of reproducibility is captured by Definition \ref{reproducible}.

\subsection{Algorithms with few influential bits}

To construct a log-space algorithm whose output is reproducible, we will design an algorithm $A$ whose output depends on $O(\log n)$ of the random bits $A$ samples. Then, the algorithm can store those $O(\log n)$ influential random bits, and using those it can reproduce its output by running again using the same $O(\log n)$ influential random bits. Below we give a precise definition of what we mean by ``influential random bits''.

\begin{definition}[Influential bits]\footnote{An algorithm is pseudo-deterministic if it has zero influential bits. In this sense, the above definition is an extension of pseudo-determinism.}
\label{infl}
Let $k(n)$ be a polynomial-time computable function. Say that a randomized log-space search algorithm $A$ has \emph{$k(n)$ influential bits} if for all valid inputs $x$, with probability at least $\frac12$ over random strings $r_1$ of length $k(n)$, we have that there exists an output $y$ such that $y$ is valid for input $x$ and $\Pr_{r_2}[A(x, r_1, r_2) = y] \ge \frac23.$ Here, $r_2$ denotes the remaining randomness (after $r_1$) used by $A$ and $A(x, r_1, r_2)$ denotes the output of $A$ on input $x$ with randomness $r_1$ and $r_2$.
\end{definition}

We now prove that if a randomized log-space algorithm $A$ has $O(\log n)$ influential bits, then its output is reproducible. Essentially, the idea is that the algorithm $A$ can store its $O(\log n)$ influential random bits in memory and then use these bits to recompute its previous output.

\begin{lemma}\label{bitstoreproduce}
If a search problem $R$ can be solved by a randomized log-space algorithm with $O(\log n)$ influential random bits, then it has log-space reproducible solutions.
\end{lemma}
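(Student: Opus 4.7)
The plan is to use the first $k(n) = O(\log n)$ random bits as the advice string $t_x$ and reproduce the answer by majority-amplifying what $M$ does on the remaining randomness, where $M$ is the hypothesized log-space algorithm for $R$ with $k(n)$ influential bits and output length at most $p(n)$. First I would define $B(x, r_1)$ as the bit-wise majority amplification of $M$ with $r_1$ fixed: for each output bit position $i$, run $\Theta(\log n)$ independent copies of $M(x, r_1, \cdot)$ using fresh randomness per copy, extract the $i$-th bit of each copy by streaming through its one-way output tape, and emit the majority. This uses only $O(\log n)$ workspace for the bit index and tallies. When $r_1$ is good in the sense of Definition \ref{infl}, there is a valid $y^* \in R(x)$ with $\Pr_{r_2}[M(x, r_1, r_2) = y^*] \ge 2/3$, so each bit of $y^*$ is the majority $M$-output bit at that position with probability $\ge 2/3$; by Chernoff plus a union bound over $p(n)$ positions, $B(x, r_1) = y^*$ with probability at least $1 - 1/n^{10}$.

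For $A(x)$ I would exhaustively enumerate the $\poly(n)$ candidates $r_1 \in \{0,1\}^{k(n)}$ in lexicographic order and, for each one, run two subtests repeated enough times to push the per-candidate error below $1/n^{10}$: (i) a validity check that invokes $B(x, r_1)$ on fresh randomness while piping its streamed output directly through the deterministic log-space verifier for $R$, and (ii) a pseudo-determinism check that runs two independent copies of $B(x, r_1)$ in lockstep and compares their outputs bit-by-bit as they are emitted. $A$ outputs the first $r_1$ that passes both subtests comfortably. Definition \ref{infl} guarantees at least half of the $r_1$'s are good, so the preceding paragraph yields one that passes with high probability; conversely any $r_1$ that passes must have $B(x, r_1)$ concentrated on a single element of $R(x)$, which is exactly what Definition \ref{reproducible} demands. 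A union bound over the $\poly(n)$ candidates keeps the overall error at $o(1)$.

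The main technical obstacle will be carrying out these subtests in logarithmic workspace, since the outputs of $B$ are polynomially long and cannot be buffered. The key observation is that the verifier for a log-space relation only needs one-way access to $y$, so $B$'s streamed bits can be piped through it on the fly; likewise the consistency check only compares one bit at a time from two synchronous $B$-streams, so two parallel copies can be simulated within $O(\log n)$ space. A secondary point is that each run of $B$ consumes polynomially many fresh random bits across the bit positions, which is compatible with one-way access to a polynomial-length random tape.
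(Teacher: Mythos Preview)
Your proposal is correct and follows essentially the same approach as the paper: bitwise-amplify the given algorithm so that a good setting of the influential bits makes the output overwhelmingly concentrated, have $A$ search among the $\poly(n)$ candidate prefixes for one that passes a concentration test, and let $B$ run the amplified algorithm with that prefix fixed. The only differences are that you enumerate candidates deterministically while the paper samples them at random, and you add an explicit validity check by piping $B$'s output through the log-space verifier for $R$---the latter is actually a small improvement, since it cleanly handles the case of a prefix on which the amplified algorithm is concentrated but on an \emph{invalid} output, which the paper's consistency test alone does not rule out.
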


\begin{proof}
Let $C$ be an algorithm for the search problem $R$ with $b = O(\log n)$ influential random bits. Let $m = \poly(n)$ be an upper bound on the output size. We will construct algorithms $A$ and $B$ that satisfy the conditions of Definition \ref{reproducible}, i.e., for an input $x$, $A$ outputs a string $t_x$ of length $O(\log n)$, such that with high probability algorithm $B$ running on input $(x, t_x)$ will pseudo-deterministically produce an output $y$ satisfying $(x, y) \in R.$

First, we will show how to amplify the $\frac{2}{3}$ from Definition \ref{infl}. We randomly generate $b$ bits (recall that $b$ is the number of influential random bits used by algorithm $C$). With probability at least $\frac12$, fixing these $b$ bits will cause algorithm $C$ to produce the same output for at least $\frac23$ of the choices for the remaining random bits. We can amplify the $\frac23$ via repetition. That is, we can create a new algorithm $C'$ where the $k$th output bit is $0$ if after running $C$ a total of $cn$ times (for some constant $c$), the majority of times the $k$th output bit was a 0. Otherwise, we set the $k$th output bit to be a $1$. If we let $c$ be sufficiently large, then by a Chernoff bound and a union bound over the coordinates of the output, we have that the whole output of $C'$ will be $y$ with probability at least $1 - \frac{1}{2^n}.$

Now, we describe the algorithms $A$ and $B$. Algorithm $A$ begins by sampling a random string $s_1$ of length $b$ bits. Next, algorithm $A$ will test whether $s_1$ is a ``good'' string. That is, we test whether with high probability there is some $y$ such that the probability $\Pr_{r_2}[C'(x, s_1, r_2) = y]$ is large (at least $1 - \frac{1}{n^2}$). This can be done by, for each output bit $i$, running the algorithm $C'$ a total of $\Theta(n^2)$ times, and checking if the $i$-th output bit was the same in all executions (we remark here that $A$ knows which bits of $C'$ are influential because those are the bits that are sampled first). If $\Pr_{r_2}[C'(x, s_1, r_2) = y]$ is at least $1 - \frac{1}{2^n}$, then $s_1$ passes this test with probability at least $1 - \frac{\Theta(n^2)}{2^n}.$ If the string $s_1$ passes, we know that with high probability $\Pr_{r_2}[C'(x, s_1, r_2) = y] \ge 1 - \frac{1}{n}$. If it is the case that for each coordinate, $C'$ outputted the same bit on each of the executions, algorithm $A$ can output $s_1$ as its string $t_x$. Otherwise, if one of the output bits was not the same on all executions (i.e., $s_1$ did not pass the test for having a high value of $\Pr_{r_2}[C'(x, s_1, r_2) = y]$), we sample a new string $s_1$ and repeat. After $O(\log n)$ tries for the string $s_1$, with high probability we will find a good string $s_1$, where $C'$ outputs a certain $y$ with high probability. Now, algorithm $B$ can simply simulate algorithm $C'$ on the input $(t_x, x)$, where $t_x$ is the good string that $A$ outputted. Since with high probability $\Pr_{r_2}[C'(x, s_1, r_2) = y] \ge 1 - \frac{1}{n}$, we know that algorithm $B$, when run multiple times on $(t_x, x)$, will output the same $y$ with high probability.
\end{proof}

In the rest of the section, we prove that every problem in search-RL has an algorithm with $O(\log n)$ influential bits:

\begin{theorem}
\label{psdlintro}
Every problem in search-RL has a randomized log-space algorithm that only has $O(\log n)$ influential bits.
\end{theorem}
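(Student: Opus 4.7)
The plan is to invoke Lemma \ref{completesearchrl} and exhibit an algorithm with $O(\log n)$ influential bits for \textsc{Short-Walk Find Path}. On input $(G, s, t, 1^k)$, the algorithm first samples a uniform offset $U$ from a discrete grid in $[0, 1]$ of size $\poly(n, k)$; these $O(\log n)$ bits will be the influential bits. Define the rounding $R(x) = \lfloor (x + U) N \rfloor / N$ for a parameter $N = \poly(n, k)$. Then the algorithm performs a greedy walk $s = v_0, v_1, \ldots, v_k$ where, at step $j$ with $v_j \neq t$, it uses Lemma \ref{estprob} to compute estimates $\hat{p}_{k-j-1}(v', t)$ for each neighbor $v'$ of $v_j$, and moves to the lexicographically smallest $v'$ satisfying $R(\hat{p}_{k-j-1}(v', t)) \geq R(\hat{p}_{k-j}(v_j, t))$. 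The output is the walk $v_0 v_1 \cdots v_k$. Everything fits in log space: only the current vertex, the step index, and scratch space for the estimator are stored at any time.

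For correctness, I use the identity $p_k(v, t) = \frac{1}{\deg(v)} \sum_{v' \sim v} p_{k-1}(v', t)$ for $v \neq t$, which guarantees that some neighbor $v'$ satisfies $p_{k-1}(v', t) \geq p_k(v, t)$, and hence $R(p_{k-1}(v', t)) \geq R(p_k(v, t))$ by monotonicity of $R$. If no estimation error ever flips a comparison, the walk maintains the invariant $R(p_{k-j}(v_j, t)) \geq R(p_k(s, t))$. Since $p_0(w, t) = [w = t]$ and $R(0) < R(p_k(s, t))$ whenever $p_k(s, t) \geq 1 - 1/|x|$ and $N$ is polynomial in $n$, the invariant at $j = k$ forces $v_k = t$.

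To reduce to this ideal case, I argue that for typical $U$ the rounded estimates equal the rounded true values on every query. For a fixed pair with true value $p$ and estimate $\hat{p}$ satisfying $|\hat{p} - p| \leq \epsilon$, the bad event $R(\hat{p}) \neq R(p)$ requires an integer boundary of the rounded grid to fall within $\epsilon N$ of $(p + U)N$, which over the random choice of $U$ has probability $O(\epsilon N)$ plus a small discretization error from the grid on which $U$ lives. Taking $\epsilon = 1/(k^5 n^5)$ from Lemma \ref{estprob}, together with $N$ and the $U$-grid size polynomial in $n, k$, a union bound over the $O(nk)$ neighbor-estimate queries shows that with probability $1 - o(1)$ over $U$, every rounded estimate matches the rounded truth. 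Conditioned on this event, the algorithm is fully determined by $U$ and produces a valid walk to $t$, which establishes both pseudo-determinism given $U$ and correctness in one stroke.

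The main obstacle is to choose $N$, the $U$-grid size, and the estimator accuracy so that the robustness union bound and the separation $R(p_k(s, t)) > R(0)$ both hold comfortably; this is where the margin $1 - 1/|x|$ in the definition of \textsc{Short-Walk Find Path} and the $1/(k^5 n^5)$ error of Lemma \ref{estprob} are balanced against the $O(nk)$ queries. Once these parameters are fixed, the remaining work is a routine check that all computations fit in log space using Lemma \ref{estprob} as a subroutine, and that the number of influential bits is exactly the $O(\log n)$ bits used to sample $U$.
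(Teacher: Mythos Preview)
Your proposal is correct and is essentially the paper's approach: reduce to \textsc{Short-Walk Find Path} via Lemma~\ref{completesearchrl}, then walk greedily toward $t$ using Lemma~\ref{estprob} to estimate the hitting probabilities, with a random shift chosen up front so that no true value $p_i(v,t)$ lands near a decision boundary. The paper implements the shift as a single random additive threshold $\tfrac12-c$ against which every neighbor is compared, whereas you implement it as a random rounding offset $U$ and compare each neighbor to the current vertex's rounded value; this is a cosmetic difference, and the analysis (union bound over the $nk$ true values $p_i(v,t)$, then Lemma~\ref{estprob} for the estimates) is the same in both cases. One small point: phrase your key event as ``with high probability over $U$, every true value $p_i(v,t)$ lies at distance $>\epsilon$ from a rounding boundary''---this depends only on $U$, after which the accuracy of the estimates is a separate high-probability event over $r_2$; as written, your sentence ``with probability $1-o(1)$ over $U$, every rounded estimate matches the rounded truth'' conflates the two sources of randomness.
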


As an immediate corollary of Theorem \ref{psdlintro} and Lemma \ref{bitstoreproduce} we have:

\begin{corollary}
Every problem in search-RL has log-space reproducible solutions.
\end{corollary}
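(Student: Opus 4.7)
By Lemma \ref{completesearchrl}, it suffices to exhibit a randomized log-space algorithm with $O(\log n)$ influential bits for \textsc{Short-Walk Find Path}. Fix an input $x = (G, s, t, 1^k)$ with $n = |x|$, and write $q_{k'}(u) := p_{k'}(u,t)$ in the graph where $t$ has been made absorbing. The idealized strategy is the greedy walk: start at $u_0 = s$, and at step $i$ move from $u_i$ to the neighbor $v$ maximizing $q_{k-i-1}(v)$, breaking ties in lex order. Because $q_{k-i}(u_i)$ is the average of $q_{k-i-1}(v)$ over neighbors $v$ of $u_i$, following the max neighbor preserves $q_{k-i}(u_i) \ge 1 - 1/n$, and then $q_0(u_k) > 0$ forces $u_k = t$. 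The obstacle is that Lemma \ref{estprob} only supplies noisy estimates of $q$, so when two neighbors have nearly equal $q$ the greedy choice depends on the estimation randomness and is not determined by a short string.

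My plan is the Saks--Zhou style random-threshold trick hinted at in the related work. Fix a rounding granularity $\delta = 1/(10kn)$, and sample once at the start a uniform offset $\tau \in \{i/T : 0 \le i < T\}$ with $T = n^{10}$; this costs $O(\log n)$ bits and will serve as the influential string $r_1$. At each step, for every neighbor $v$ of the current vertex, invoke the estimator of Lemma \ref{estprob} (with a polynomial amplification if the residual length $k'$ is too small for the stated error to suffice) to obtain $\hat q(v)$ with additive error at most $\epsilon := 1/(kn)^{5}$, and assign $v$ the integer score
\[ \mathrm{score}_\tau(v) \;=\; \big\lfloor \hat q(v)/\delta + \tau \big\rfloor. \]
The next vertex $u_{i+1}$ is the lex-smallest neighbor maximizing $\mathrm{score}_\tau$. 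The implementation runs in $O(\log n)$ space: the algorithm only stores $\tau$, the current vertex, and a running maximum while scanning neighbors.

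The key claim is that for almost every $\tau$ the output path depends only on $\tau$. The rounded score of $\hat q(v)$ equals the rounded score of the true $q(v)$ unless $q(v)/\delta + \tau$ lies within $\epsilon/\delta$ of an integer, an event of probability $O(\epsilon/\delta) = O(1/(kn)^{4})$ over the grid for $\tau$. Union-bounding over the at most $kn$ step-neighbor pairs examined gives a bad-$\tau$ probability of $O(1/(kn)^{3})$, well below $1/2$. For every good $\tau$ the entire path is then a deterministic function of $\tau$ up to the polynomially small failure probability of the estimators (folded in by another union bound), which is exactly Definition \ref{infl}. It remains to check that the walk reaches $t$: the rounded scores can tie two neighbors only when their true $q$ values differ by less than $\delta$, so each step loses at most $\delta$ in the hitting probability maintained by the current vertex, giving $q_{k-i}(u_i) \ge 1 - 1/n - i\delta$ by induction and therefore $q_0(u_k) \ge 1 - 1/n - 1/(10n) > 0$, forcing $u_k = t$.

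The main obstacle is juggling the parameters simultaneously: $\tau$ must fit in $O(\log n)$ bits, the density of rounding boundaries must beat the union bound over all step-neighbor pairs, the estimator of Lemma \ref{estprob} must deliver the chosen error $\epsilon$ even for the small-$k'$ steps near the end of the walk (where the stated error $1/((k')^{5} n^{5})$ is not automatically small enough and a mild amplification via more sampled walks is required), and the cumulative rounding slippage $k\delta$ must remain strictly below the slack $1 - 1/|x|$ inherited from the \textsc{Short-Walk Find Path} hypothesis. Once these are balanced, the remainder is a routine union bound and a log-space implementation of the greedy walk.
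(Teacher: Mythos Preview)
Your argument is correct and follows the same overall strategy as the paper: reduce to \textsc{Short-Walk Find Path} via Lemma~\ref{completesearchrl}, run a greedy walk that at each step picks a neighbor based on estimated hitting probabilities, and spend $O(\log n)$ random bits on a perturbation that, with high probability, keeps every comparison bounded away from its decision boundary (then invoke Lemma~\ref{bitstoreproduce}, which you should cite explicitly alongside Lemma~\ref{completesearchrl} in your first sentence).

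The one substantive difference is in how the perturbation is used. The paper samples a single random threshold $c$ and at each step takes the lexicographically first neighbor whose estimate exceeds $1/2-c$; because the chosen neighbor genuinely satisfies $p_{d-1}(v,t)\ge 1/2-c$, the invariant $p_d(u,t)\ge 1/2-c$ is preserved exactly at every step. Your version instead samples a random rounding offset $\tau$, rounds all estimates to multiples of $\delta$, and takes the arg-max; since ties under rounding allow the chosen neighbor to fall short of the true maximum by up to $\delta$, your invariant degrades additively to $q_{k-i}(u_i)\ge 1-1/n-i\delta$, which you then absorb by choosing $\delta=1/(10kn)$. Both mechanisms work, but the paper's single-threshold comparison is slightly cleaner (no degrading invariant, and no need to re-amplify Lemma~\ref{estprob} for small residual lengths, since a single error level suffices throughout). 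Your randomized-rounding formulation is closer in spirit to the Saks--Zhou shift and would generalize more readily if one ever needed to compare more than one pair of quantities per step.
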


%To make the notion of entropy in randomized algorithms more precise, we make the following definition.

%\begin{definition} Let PSDL[$g(n)$] denote the class of search problems that can be solved with a log-space randomized algorithm whose output has entropy $O(g(n)).$ \end{definition}

%With the above defintion in mind, we can restate Theorem \ref{psdlintro} as searchRL $\in$ PSDL[$\log n$]. Now, we are almost ready to present an overview of the algorithm for the problem in Lemma \ref{completesearchrl}.

\subsection{High Level Proof Idea for Theorem \ref{psdlintro}}

At the high level, the idea for the algorithm for Theorem \ref{psdlintro} is as follows. First, we consider the problem \textsc{Short-Walk Find Path} from Definition \ref{swfp}, which we know is complete for search-RL by Lemma \ref{completesearchrl}. Now, suppose that we wish to find a path from $s$ to $t$, and we know that $p_k(s, t) \ge \frac{1}{2}$ (see Definition \ref{def:pk} for a definition of $p_k$). This implies that there must exist an outneighbor $v$ of $s$ such that $p_{k-1}(v, t) \ge \frac{1}{2}.$ Therefore, if we could estimate $p_{k-1}(v, t)$ for all outneighbors $v$ of $s$, we could pick the lexicographically first neighbor satisfying $p_{k-1}(v, t) \ge \frac{1}{2}$, and continue recursively from there. Since $v$ is uniquely determined (it is the lexicographically first outneighbor of $s$ satisfying $p_{k-1}(v, t) \ge \frac{1}{2}$), if such an algorithm worked, it would be fully pseudo-deterministic (and hence would have no influential random bits).

Unfortunately, this proposed algorithm of finding an outneighbor will not work. To see why, consider the situation where for the first outneighbor $v$ of $s$ that we check, $p_{k-1}(v, t)$ is exactly equal to $1/2$. Then no matter how accurately we estimate $p_{k-1}(v, t)$, much of the time our estimate will be less than $1/2$, and other times it will be greater than $1/2$. This makes our algorithm not pseudo-deterministic, as in some runs we will use vertex $v$ in the path, and in other runs we will not.

Instead, we construct an algorithm that has logarithmically many influential bits in the following way. We will generate a \emph{threshold} $c$ (from some distribution) and find the first outneighbor $v$ satisfying $p_{k-1}(v, t) \ge 1/2 - c$, and use that vertex $v$ as part of our path. Then, we recurse to find the next vertex in the path. Of course, this still fails if $p_{k-1}(v, t) = 1/2 - c$ (or if $p_{k-1}(v, t)$ is close to $1/2 - c$). However, if the value of $c$ is far away from all values of $1/2 - p_{i}(u, t)$ for all $u$ and all $1 \le i \le k$, then our algorithm, for this fixed value of $c$, will always give the same output. Hence, to get an algorithm with logarithmically many influential bits for \textsc{Short-Walk Find Path}, we just need a way to use logarithmically many bits to select a value of $c$ such that for all $1 \le i \le k$, and vertices $v$, $|1/2 - p_{k-1}(v, t) - c|$ is large (at least $1 / n^5k^5$).

We are able to find such a value of $c$ by sampling it at random from some set of polynomial size. Note that there are $kn$ possible values for an expression of the form $1/2 - p_{i}(u, t)$ (with $i \le k$), since there are $n$ options for $u$, and $k$ options for $i$, and we need $c$ to be far from all $kn$ of these options. If we were to randomly sample $c$ from the set $\{\frac{1}{k^4n^4}, \frac{2}{k^4n^4}, \dots, \frac{k^2n^2}{k^4n^4} \}$, then with high probability our chosen value of $c$ would be far away from all of the expressions of the form $1/2 - p_{i}(u, t)$ (with $i \le k$), and hence once we fix such a $c$, we get the same output with high probability. Because we can sample $c$ using $O(\log nk)$ bits, our output will only depend on the first $O(\log nk)$ bits sampled.

\subsection{Algorithm and Analysis}

Here we will state the algorithm for Theorem \ref{psdlintro} more precisely and provide a detailed analysis.

\begin{algorithm}
\caption{Randomized algorithm with $O(\log n)$ influential bits for \textsc{Short-Walk Find Path} on input $(G, s, t, 1^k)$}
\begin{algorithmic}[1]

%\Statex
\State Initialize $u = s$. $u$ is the current vertex.
\State Choose a threshold $c$ from the set $\{\frac{1}{k^4n^4}, \frac{2}{k^4n^4}, \dots, \frac{k^2n^2}{k^4n^4} \}$ uniformly at random.
\For{$d = k, k-1, \dots, 1$}
	\State Print $u$ (on the output tape).
    \For{each outneighbor $v$ of $u$ (in lexicographic order)}
    	\State Estimate $1/2 - p_{d-1}(v, t)$, up to additive error $\frac{1}{k^5n^5}$ (use Lemma \ref{estprob}). Call the estimate $\mu.$
    	\If{$\mu \le c$} set $u \leftarrow v$, and continue (i.e., return to line 3).
    	\EndIf
    \EndFor
\EndFor
\end{algorithmic}
\label{alg:searchRL}
\end{algorithm}

\begin{lemma}
\label{mainlemma}
Algorithm \ref{alg:searchRL} runs in randomized log-space, has $O(\log nk)$ influential bits, and with high probability it outputs a path from $s$ to $t$ in expected polynomial time.
\end{lemma}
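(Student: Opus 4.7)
The plan is to verify the three claims of Lemma \ref{mainlemma} in order. The log-space claim is immediate: at any moment the algorithm stores the current vertex $u$, the loop counter $d$, the threshold $c$ (encoded as its index in a grid of size $k^2n^2$), and the candidate outneighbor $v$, totaling $O(\log nk)$ bits. The estimation subroutine of Lemma \ref{estprob} itself runs in log-space, so the composed computation is log-space.

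The main work is the influential-bits claim. I take the $O(\log(k^2n^2)) = O(\log nk)$ bits used to sample $c$ to be the ``first'' random bits $r_1$ in Definition \ref{infl}. Call $c$ \emph{good} if $|c - (1/2 - p_i(v, t))| > 1/(k^5n^5)$ for every vertex $v$ and every $1 \le i \le k$. There are at most $kn$ distinct forbidden values $1/2 - p_i(v,t)$, and each lies within $1/(k^5n^5)$ of at most $\lceil 2/(kn) \rceil + 1 \le 2$ grid points (since the spacing $1/(k^4n^4)$ is much larger than the exclusion window); thus at most $2kn$ of the $k^2n^2$ grid points are bad, and $c$ is bad with probability at most $2/(kn) \le 1/2$. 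Conditional on $c$ good, each estimate $\mu$ returned by Lemma \ref{estprob} satisfies $|\mu - (1/2 - p_{d-1}(v, t))| \le 1/(k^5n^5) < |c - (1/2 - p_{d-1}(v, t))|$, so $\mu$ and $1/2 - p_{d-1}(v, t)$ lie on the same side of $c$; hence the test $\mu \le c$ has a deterministic outcome whenever the estimate is within its error tolerance. The algorithm makes at most $kn$ estimation calls, each accurate with probability $\ge 1 - 2e^{-2kn}$, so by a union bound all succeed --- and thus the entire output is a fixed string $y$ --- with probability at least $2/3$ over $r_2$. This is exactly what Definition \ref{infl} requires.

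For correctness, I prove by induction on the step index $i$ that the current vertex $u_i$ satisfies $p_{k-i}(u_i, t) > 1/2 - c$, conditioned on the joint event that $c$ is good and all estimates are accurate. The base case is $p_k(s, t) \ge 1 - 1/|x| > 1/2 - c$. For the inductive step, the averaging identity $p_{k-i}(u_i, t) = \frac{1}{\deg(u_i)} \sum_{v \sim u_i} p_{k-i-1}(v, t)$ furnishes some outneighbor $v$ of $u_i$ with $p_{k-i-1}(v, t) \ge p_{k-i}(u_i, t) > 1/2 - c$, so the algorithm's lexicographically first threshold-passing neighbor exists and becomes $u_{i+1}$, preserving the invariant. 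When $i = k$ the invariant gives $p_0(u_k, t) > 0$, and since $p_0(u_k, t) = \mathbf{1}[u_k = t]$, this forces $u_k = t$, so the sequence of vertices $u_0, u_1, \dots, u_k$ is a valid $s$-to-$t$ walk. For the runtime, Lemma \ref{estprob}'s subroutine runs in $\poly(n, k)$ time, at most $kn$ calls are made, and the outer loop has at most $k$ iterations, so the worst-case total time is polynomial.

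The main obstacle is calibrating the three parameter scales --- the estimation error $1/(k^5n^5)$, the grid spacing $1/(k^4n^4)$, and the grid size $k^2n^2$ --- so that simultaneously (i) the fraction of bad $c$ is at most $1/2$, and (ii) on any good $c$, the estimation error cannot flip a comparison. Once these polynomial exponents align, the induction and union bound go through without further difficulty.
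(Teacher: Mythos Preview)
Your proof is correct and follows essentially the same approach as the paper: identify the bits sampling $c$ as the influential bits, call $c$ \emph{good} when it is $1/(k^5n^5)$-far from every value $1/2 - p_i(v,t)$, show most $c$ are good by counting, and then observe that on a good $c$ every threshold comparison has a deterministic outcome whenever the estimate is within tolerance, so a union bound over the at most $kn$ calls to Lemma~\ref{estprob} finishes the influential-bits claim; correctness is the same invariant $p_d(u,t) > 1/2 - c$ maintained by averaging. Your treatment is in fact slightly more explicit than the paper's (you spell out the base case, the terminal step $p_0(u_k,t)>0 \Rightarrow u_k=t$, and the count of bad grid points), but the argument is the same.
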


\begin{proof}
We first show the algorithm runs in randomized log-space. Then we show the algorithm outputs a path with high probability in polynomial time, and then we show the output has $O(\log nk)$ influential bits. \\

\noindent \textbf{Runs in randomized log-space:} At every point in the algorithm, we must store in memory the value of $c$ (which requires $\log (\poly(n, k)) = O( \log nk)$ bits), the current value of $d$, which requires $\log k$ bits, and the current vertex $u$, which requires $\log n$ bits. In addition, in line 6 we estimate the value of $p_{d-1}(v, t)$, which can be done in log-space by Lemma \ref{estprob}. Hence, the total number of bits needed is $O( \log nk)$, which is logarithmic in the input size. \\

\noindent \textbf{With high probability outputs a path from $s$ to $t$ in polynomial time:} Out of the possible values for $c$ in the set $\{\frac{1}{k^4n^4}, \frac{2}{k^4n^4}, \dots, \frac{k^2n^2}{k^4n^4} \}$, at most $kn$ of them could satisfy $|1/2 - p_i(v, t) - c| \le \frac{1}{n^5k^5}$ for some value of $1 \le i \le k$ and vertex $v$ (since there are $kn$ possible values of $p_i(v, t)$). We choose such a value with probability at most $\frac{1}{kn}$ (since there are at most $kn$ such ``bad" choices, out of $k^2n^2$ total choices for $c$). Now, consider the other values of $c$, which do not satisfy $|1/2 - p_i(v, t) - c| \le \frac{1}{n^5k^5}$ for any $i$ and $v$. We now show that with high probability if the if statement in line 7 is satisfied, it is the case that with high probability $1/2 - p_{d-1}(v, t) \le c$. This is since $1/2 - p_{d-1}(v, t)$ is more than $\frac{1}{k^5n^5}$ away from $c$, and by Lemma \ref{estprob}, the estimate for $p_{d-1}(v, t)$ is within $\frac{1}{k^5n^5}$ of the true value of $p_{d-1}(v, t)$ with high probability.
Since with high probability $p_d(u, t) \ge 1/2 - c$, vertex $u$ must have an outneighbor $v$ satisfying $p_{d-1}(v, t) \ge 1/2 - c$. Since $p_{d-1}(v, t)$ is further than $\frac{1}{n^5k^5}$ from $c$, by Lemma \ref{estprob}, with high probability when reaching the vertex $v$ in the for loop of line 5, in line $7$ the if statement will be satisfied. Hence, with high probability, $u$ will change on each iteration of the for loop of line 3, and we maintain that with high probability throughout the algorithm the values of $d$ and $u$  satisfy $p_d(u, t) \ge 1/2 - c$.

Now we show that the algorithm succeeds with high probability. Once again, the probability we choose a $c$ satisfying $|1/2 - p_i(v, t) - c| \le \frac{1}{n^5k^5}$ for some $1 \le i \le k$ and vertex $v$ is at most $\frac{1}{nk}.$ The remaining probabilistic parts of the algorithm come from estimating $1/2 - p_i(u, t)$ to an additive error of $\frac{1}{n^5k^5}.$ By Lemma \ref{estprob}, this has error probability at most $2e^{-2nk}$ per estimate. As we make at most $nk$ estimates, the error probability here is bounded by $2nke^{-2nk}$, which is low. \\

\noindent \textbf{Output has $O(\log nk)$ influential bits:} We claim the influential bits used by the algorithm are the bits used to pick $c$. Out of the values $c$ in the set $\{\frac{1}{k^4n^4}, \frac{2}{k^4n^4}, \dots, \frac{k^2n^2}{k^4n^4} \}$, at most $kn$ of them could satisfy $|1/2 - p_i(v, t) - c| \le \frac{1}{n^5k^5}$ for some values of $1 \le i \le k$ and vertices $v.$ The probability that we pick such a $c$ is $\frac{nk}{n^2 k^2} = \frac{1}{nk}$, so with high probability we do not pick such a $c$.

Now, for the remaining values of $c$, the algorithm will have the same output with high probability over the remaining random bits. We note that the only other place where randomness is used is in line 6 to estimate $\mu$. Note that if $1/2 - p_{d-1}(v, t) \le c$, we also know that $1/2 - p_{d-1}(v, t) \le c - \frac{1}{n^5k^5}$. Hence, by Lemma \ref{estprob}, the probability that in this case the if statement in line 7 is not satisfied is at most $2 e^{-2nk}$. Similarly, if $1/2 - p_{d-1}(v, t) \ge c$, we know that $1/2 - p_{d-1}(v, t) \ge c+\frac{1}{n^5k^5}$, and so the if statement in line 7 is satisfied with probability at most $2 e^{-2nk}$. Hence, with high probability (at least $1 - 2nke^{-2nk}$) the if statement in line $7$ is satisfied if and only if $1/2 - p_{d-1}(v, t) \le c$, and so the output is the same for almost all choices of the remaining random bits.
\end{proof}

Lemma \ref{mainlemma} immediately implies Theorem \ref{psdlintro}, completing the proof.

\subsection{Why we cannot try all possible thresholds}

One idea to make the algorithm pseudo-deterministic would be to try every possible value of $c$ (of which there are polynomially many, and therefore can be enumerated), thus removing the randomization required to sample $c$. This idea will not immediately provide a pseudo-deterministic algorithm, as we explain below.

Consider the approach of going over all possible values of $c$ in some set and choosing the first ``good one'', i.e. the first value of $c$ which is far from all values of $1/2 - p_i(v, t)$. The problem with such an algorithm is that for a fixed value of $c$, it may be hard to tell whether it is a ``good'' value of $c$. Suppose, for example, that we call a value ``good'' if it is at distance at least $1/n^2k^2$ from any value of $1/2 - p_i(v, t)$. Then, if the distance is exactly $1/n^2k^2$, it is not clear how one can check if the value of $c$ is good or not. If we simply estimate the values of $1/2 - p_i(v, t)$ and see if out estimates are at distance at least $1/n^2k^2$, we will sometimes choose $c$, and sometimes we will not (depending on the randomness we use to test whether $c$ is good). Hence, the algorithm will not be pseudo-deterministic, since this value of $c$ will sometimes be chosen, and sometimes a different value of $c$ will be chosen.

\subsection{Discussion of Algorithm \ref{alg:searchRL}}

\label{entropy}

Algorithm \ref{alg:searchRL} has the property that its output, when viewed as a distribution depending on the random bits chosen by the algorithm, has entropy $O(\log n).$ This essentially follows from the fact that the output of Algorithm \ref{alg:searchRL} with very high probability depends on only its first $O(\log n)$ random bits. Hence, after amplifying the success probability, one can show the entropy of the output would be $O(\log n).$ We note that for a pseudo-deterministic algorithm, the output has entropy less than $1$. An arbitrary search-RL algorithm can have polynomial entropy.

Another way to view Algorithm \ref{alg:searchRL} is that with high probability, the output will be one of polynomially many options (as opposed to a unique option, which would be achieved by a pseudo-deterministic algorithm). That is, for each input $x$, there exists a list $L_x$ of polynomial size such that with high probability, the output is in $L_x$. This follows from the fact that with high probability, the output of Algorithm \ref{alg:searchRL} only depends on its first $O(\log n)$ random bits. Therefore, with high probability, the output will be one of $2^{O(\log n)} = \poly(n)$ different paths. Another way to see this is that with high probability the outputted path depends only on the choice of $c$, and there are polynomially many ($n^2 k^2$) possible values for $c$.

\section{Improved Pseudo-deterministic Algorithms for Connectivity}

In this section, we show faster pseudo-deterministic algorithms for both undirected connectivity and directed connectivity in Eulerian graphs. While both of these problems have been shown to be in deterministic log-space \cite{reingold, RTV}, our algorithms here have a much lower run-time than those in \cite{reingold, RTV}.

We note that throughout this section, to compute runtime, instead of dealing with Turing machines, we assume that we can make the following queries in $O(1)$ time: for a vertex $v$ we can query the degree of vertex $v$, and given a vertex $v$ and an integer $i$ we can query the $i$-th neighbor of $v$ (if  $v$ has fewer than $i$ neighbors, such a query returns $\bot$). In the case of an Eulerian graph, we assume that for a vertex $v$, we can query the degree of $v$, its $i$-th in-neighbor, and its $i$-th out-neighbor.

\subsection{Undirected Graphs}\label{subsec:undirected}

In this section, we present the algorithm for undirected graphs. Throughout we assume that we have a graph $G$ (possibly with multiedges or self-loops) with $n$ vertices and $m$ edges, and we wish to find a path from vertex $s$ to vertex $t$ (assuming such a path exists). We will number the vertices from $1$ to $n$, and refer to the $k^{th}$ vertex as ``vertex $k$''. The idea for the algorithm is as follows. First, note that checking connectivity \emph{using randomness} in undirected graphs is possible \cite{undirectedRL}: if vertices $s$ and $t$ are connected, then a random walk starting from $s$ of length $\tilde{O}(mn)$ will reach reach $t$ with high probability:

\begin{lemma}
\label{randconnect}
Given an undirected or Eulerian graph $G$ and two vertices $s$ and $t$, there exists a randomized algorithm running in time $\tilde{O}(mn)$ that checks whether there exists a path from $s$ to $t$, and succeeds with probability $1 - \frac{1}{n^{10}}.$
\end{lemma}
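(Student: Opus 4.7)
The plan is to simulate a random walk from $s$ for $T = \Theta(mn\log n)$ steps and output \emph{``connected''} iff the walk visits $t$ at some point. The underlying fact we will rely on is the classical bound (Aleliunas--Karp--Lipton--Lov\'asz--Rackoff) that in a connected undirected graph with $n$ vertices and $m$ edges, the hitting time satisfies $H(s,t) \le Cmn$ for some absolute constant $C$. The analogue holds in Eulerian directed graphs: each weakly connected component of an Eulerian graph is in fact strongly connected, the stationary distribution of the walk is $\pi(v) = \deg(v)/(2m)$, and the commute-time type estimates from the undirected case carry over to give $H(s,t) = O(mn)$.

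The analysis then proceeds in two cases. If there is no path from $s$ to $t$, the walk can never reach $t$, so the algorithm is correct with probability $1$. If a path exists, Markov's inequality says that after one window of $2Cmn$ steps the walk has visited $t$ with probability at least $1/2$. By the Markov property, conditioned on not having hit $t$, we may restart the analysis from the current vertex $v$, which by the same hitting-time bound satisfies $H(v,t) \le Cmn$. Chaining $k$ such windows shows that the probability of never reaching $t$ in $2Cmnk$ steps is at most $2^{-k}$. Choosing $k = 10\log_2 n + O(1)$ yields total walk length $T = \tilde O(mn)$ and error probability at most $1/n^{10}$.

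The running time bound is immediate in the query model stated just before the lemma: a single walk step requires one random choice in $\{1, \dots, \deg(u)\}$ plus one neighbor query, both $O(1)$, so simulating $T$ steps takes time $\tilde O(mn)$. Storage is trivial (the current vertex and a step counter), so the algorithm also fits in $O(\log n)$ space, which will be needed when it is invoked as a subroutine in the later pseudo-deterministic algorithms.

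The main technical subtlety is justifying the $O(mn)$ hitting-time bound for the Eulerian case, since the undirected bound is textbook material but the Eulerian version is less standard. The key observation is that for an Eulerian directed graph one can form the ``edge chain'' that traverses oriented edges; this chain is reversible when Eulerianness is used to identify the in- and out-edge measures, which lets the commute-time identity $H(u,v) + H(v,u) = 2m \cdot R_{\mathrm{eff}}(u,v)$ and the standard effective-resistance bound $R_{\mathrm{eff}}(u,v) \le n$ be adapted almost verbatim. With this bound in hand, the rest of the argument is identical to the undirected case.
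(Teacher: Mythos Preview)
Your overall plan---simulate a random walk, use the Aleliunas--Karp--Lipton--Lov\'asz--Rackoff $O(mn)$ hitting-time bound, amplify via Markov's inequality and $O(\log n)$ independent windows---is exactly what the paper does for the undirected case. The difference is entirely in how the Eulerian case is handled.

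The paper does \emph{not} run a directed random walk on the Eulerian graph. Instead, it first proves the elementary fact that in an Eulerian digraph, if there is an edge $u\to v$ then there is a directed path $v\to u$ (argue that the set of vertices reachable from $v$ has no outgoing edges, hence by the Eulerian condition no incoming edges either, so $u$ must already lie in that set). Consequently, directed reachability in an Eulerian graph coincides with connectivity in the underlying undirected multigraph, and the paper simply runs the \emph{undirected} random walk there. This sidesteps any need for a hitting-time bound on non-reversible chains.

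Your proposal instead asserts $H(s,t)=O(mn)$ for the \emph{directed} walk on an Eulerian graph, and justifies it by saying the ``edge chain'' is reversible so that the commute-time identity $H(u,v)+H(v,u)=2m\cdot R_{\mathrm{eff}}(u,v)$ and the bound $R_{\mathrm{eff}}(u,v)\le n$ carry over. This justification does not work as stated. The edge chain on an Eulerian digraph (states are directed edges, from $(u,v)$ move to a uniformly random out-edge of $v$) has uniform stationary distribution, but it is \emph{not} reversible: the transition probability from $(v,w)$ back to $(u,v)$ is zero, so detailed balance fails. The commute-time/effective-resistance identity is a theorem about reversible chains and cannot be invoked here ``almost verbatim.'' An $O(mn)$ bound for Eulerian hitting times can in fact be proved, but it requires a different argument; the simplest fix is precisely the paper's reduction to the undirected case, which you already implicitly know (you note that weakly connected Eulerian components are strongly connected) but do not exploit.
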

A version of Lemma \ref{randconnect} has been shown in \cite{undirectedRL}. For completeness, we include a proof in Appendix \ref{app:randconnect}.

Now, we proceed to prove Theorem \ref{slpsd}, restated here for convenience.

\begin{theorem*}[Pseudo-deterministic Undirected Connectivity in $\tilde{O}(mn^3)$ time, $O(\log n)$ space]

Let $G$ be a given undirected graph with $n$ vertices and $m$ edges. Given two vertices $s$ and $t$ of $G$ which are connected, there is a pseudo-deterministic log-space algorithm which outputs a path from $s$ to $t$. Furthermore, the algorithm runs in time $\tilde{O}(mn^3)$.
\end{theorem*}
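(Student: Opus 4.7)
The plan is to compute a canonical $s$-to-$t$ path one vertex at a time, using Lemma \ref{randconnect} as the main subroutine. I maintain a current vertex $u$, initially $u = s$. In each phase, I examine the neighbors of $u$ in lexicographic order and pick the first one passing a connectivity-based admissibility test; I then advance $u$ to that neighbor and output the corresponding edge. The algorithm halts when $u = t$.

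The key design choice is the admissibility test. It should have three properties: (a) a deterministic yes/no answer, so that Lemma \ref{randconnect} resolves it correctly with probability $1 - 1/n^{10}$; (b) a uniquely determined first admissible neighbor, so that the algorithm is pseudo-deterministic; and (c) monotone progress toward $t$, so that the path has polynomial length. A natural candidate is to declare $v$ admissible when $v$ is connected to $t$ in a subgraph $G_u$ of $G$ obtained by pruning vertices or edges in a way that depends only on $u$ (for example, deleting $u$ itself, together with enough extra structure to prevent oscillation). The intent is that $v$ passes the test exactly when the edge $(u,v)$ lies on a canonical ``forward'' route from $u$ to $t$.

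The critical advantage over the setting of Section \ref{sec:alg} is that whether $v$ is connected to $t$ in a subgraph is a binary property rather than a continuous probability. In Algorithm \ref{alg:searchRL}, borderline values of $p_k(v,t)$ forced a randomized threshold and left $O(\log n)$ influential bits; here each test has an unambiguous 0/1 answer, so the identity of the first admissible neighbor is a deterministic function of $G$, $u$, and $t$, and the same path is reconstructed every time modulo the small error of Lemma \ref{randconnect}. With at most $O(n)$ path steps and $O(n)$ neighbor checks per step, the algorithm makes $O(n^2)$ calls to Lemma \ref{randconnect}. Each call costs $\tilde{O}(mn)$ time and $O(\log n)$ space, giving total running time $\tilde{O}(mn^3)$ and logarithmic space; a union bound over the $O(n^2)$ tests bounds the misclassification probability by $O(1/n^8)$, so the algorithm agrees on a single canonical output path with high probability, which is the pseudo-determinism requirement.

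The main obstacle is pinning down $G_u$ so that the three desiderata hold simultaneously. The naive choice $G_u = G \setminus \{u\}$ clearly satisfies (a) and (b) but can oscillate: at $v$, the lex-first admissible neighbor in $G \setminus \{v\}$ may again be $u$. On the other hand, overly aggressive pruning risks disconnecting $t$ from every neighbor of $u$ and stalling the algorithm. The bulk of the proof will therefore be devoted to (i) specifying a pruning rule computable from $u$ alone in $O(\log n)$ space, (ii) showing that every vertex on the algorithm's trajectory has at least one admissible neighbor, and (iii) exhibiting a potential function (such as distance to $t$ in a suitable auxiliary graph) that strictly decreases along the trajectory, bounding its length by $O(n)$ and closing the runtime analysis.
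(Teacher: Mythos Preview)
Your high-level intuition is right: the advantage over Section~\ref{sec:alg} is that connectivity is a $0/1$ predicate, so Lemma~\ref{randconnect} resolves each test with high probability and there are no borderline thresholds. Your time and space accounting is also consistent with the paper's. But the proposal is not yet a proof: you explicitly leave the crucial ingredient---the pruning rule defining $G_u$ and the potential function enforcing progress---to future work, and that is exactly where all the content lies. Until (i)--(iii) are actually specified and proved, this is an outline of desiderata, not an algorithm.

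More substantively, the paper's construction does \emph{not} fit your template of a subgraph $G_u$ depending only on the current vertex $u$. Instead of testing neighbors of $u$ against a single pruned graph, the paper maintains a second variable $v_{dest}$ (initially $t$) and runs an inner loop over $k = 1, \dots, n$: for each $k$ it tests whether $v_{cur}$ and $v_{dest}$ remain connected in the subgraph induced on $\{v_{cur}, v_{dest}, k{+}1, \dots, n\}$; the first $k$ that disconnects them becomes the new $v_{dest}$, and the loop continues. The walk advances only when $v_{cur}$ is adjacent to the current $v_{dest}$. Progress is witnessed not by a scalar potential but by the \emph{destination sequence}---the sequence of $v_{dest}$ values visited while $v_{cur}$ is fixed---which strictly increases under a lexicographic-like order each time $v_{cur}$ changes; this is what bounds the path length by $n$. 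So the subgraph being queried depends on $v_{cur}$, the evolving $v_{dest}$, and the loop counter $k$ simultaneously, which is strictly richer than your ``$G_u$ depends only on $u$'' scheme. You should either adopt this destination-refinement idea or produce a concrete $G_u$ meeting your three desiderata; as written, the gap is the absence of any such rule.
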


For our pseudo-deterministic algorithm for undirected connectivity, we use the following approach. We delete vertices of small ID from the graph one at a time (excluding $s$ and $t$), and check if vertex $s$ is still connected to $t$. Now, suppose that after deleting vertices $1, 2, 3, \dots, k-1$, excluding $s$ and $t$, (recall that we number the vertices from $1$ to $n$, and refer to the $i$th vertex as ``vertex $i$"), $s$ is still connected to $t$. However, suppose that when we delete vertices $1, 2, \ldots, k$, vertex $s$ is no longer connected to $t$. Then we will recursively find paths $s \to k$ and $k \to t$. Repeating this process, we will get a path from $s \to t.$

Of course, as described, this algorithm will not run in log-space (since, for example, one must store in memory which vertices have been removed, as well as the recursion tree, both of which may require large space). With a few modifications though, one can adapt the algorithm to run in log-space. For a complete description of the algorithm, see Algorithm \ref{alg:psdSL}.

We use the variables $v_{cur}$ and $v_{dest}$ to denote the vertex our walk is currently on and the vertex which is the destination (at the current level of the recursion).

\begin{algorithm}
\caption{Pseudo-deterministic log-space algorithm for undirected connectivity.}
\begin{algorithmic}[1]
%\Statex
\State Use Lemma \ref{randconnect} to test if $s$ and $t$ are connected. If they are not, return ``not connected''
\State Set $v_{cur} = s$.
\While{$v_{cur}\neq t$}
	\State Set $v_{dest} = t$
	\For{$k = 1, 2, \dots, n$}
    	\If{$v_{cur}$ is adjacent to $v_{dest}$} set $v_{cur} \leftarrow v_{dest}$, print $v_{cur}$ (on the output tape), and go to line 3.
        \EndIf
        \If{$v_{cur}$ is not connected to $v_{dest}$ in the graph with vertices $v_{cur}$, $v_{dest}$, and $k+1, k+2, \dots, n$ (for a detailed description of the implementation of this step, see the proof of Lemma \ref{undiralg})} set $v_{dest} \leftarrow k.$
        \EndIf
	\EndFor
\EndWhile
\end{algorithmic}
\label{alg:psdSL}
\end{algorithm}

\begin{lemma} \label{undiralg} Given a graph $G$, Algorithm \ref{alg:psdSL} outputs a path from vertex $s$ to $t$ with high probability (if such a path exists), and runs in pseudo-deterministic log-space and time $\tilde{O}(mn^3)$.
\end{lemma}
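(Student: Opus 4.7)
The plan is to establish four facts: the space bound, correctness of the printed path, the running time, and pseudo-determinism. First, the only values explicitly stored by the algorithm are $v_{cur}$, $v_{dest}$, and $k$, each of $O(\log n)$ bits, plus the workspace of the subroutine from Lemma~\ref{randconnect}. The restricted connectivity test in line~7 can be implemented by running that subroutine on a virtual graph in which a neighbor query at vertex $v$ suppresses any reported neighbor of ID at most $k$ unless it equals $v_{cur}$ or $v_{dest}$; this filter uses only the already-stored variables, so the total space remains $O(\log n)$.

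For correctness of the inner for loop, I would prove the invariant: at the start of iteration $k$, $v_{cur}$ is connected to $v_{dest}$ in the subgraph induced by $\{v_{cur}, v_{dest}\} \cup \{k, k+1, \ldots, n\}$. The base case $k = 1$ is immediate since $s$ and $t$ were checked connected in line~1. For the inductive step, if the second if triggers at iteration $k$, then every connecting path in the previous subgraph must pass through vertex $k$, and the prefix of such a path from $v_{cur}$ to $k$ witnesses the updated invariant with $v_{dest} = k$. As $k$ grows the subgraph shrinks to just the two endpoints, which forces $v_{cur}$ and $v_{dest}$ to become adjacent and triggers the jump in line~6; thus each outer iteration prints a genuine edge of $G$ and advances $v_{cur}$ to a neighbor.

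The main obstacle will be bounding the number of outer iterations. Conditional on all connectivity tests giving correct answers, the next vertex chosen depends only on $v_{cur}$, so define $f(v)$ as this choice; the walk is then $s, f(s), f^2(s), \ldots$ and it suffices to show it reaches $t$ without revisiting a vertex. My plan is to identify $f$ with the recursive ``divide at the smallest separating vertex'' construction sketched just before Algorithm~\ref{alg:psdSL}: one finds the smallest $k$ whose removal disconnects $v_{cur}$ from $t$ in the current vertex set, recurses on $v_{cur} \to k$ and $k \to t$, and concatenates. Since that construction produces a simple path (by induction on the allowed vertex set), the walk visits each vertex at most once and has length at most $n$.

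Given the simple-path bound, each of the $O(n)$ outer iterations performs $O(n)$ inner iterations, each doing one invocation of Lemma~\ref{randconnect} at cost $\tilde O(mn)$, for a total of $\tilde O(mn^3)$ time and $O(n^2)$ connectivity tests overall. Pseudo-determinism now follows: the only randomness in Algorithm~\ref{alg:psdSL} sits inside those tests (together with line~1), each correct with probability at least $1 - 1/n^{10}$ by Lemma~\ref{randconnect}, so a union bound makes all of them simultaneously correct with probability $1 - O(n^{-8})$. On that event the entire execution is a deterministic function of $G$, $s$, and $t$, so the algorithm's output is canonical.
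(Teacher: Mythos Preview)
Your overall structure—space bound, inner-loop invariant, time bound, and pseudo-determinism via a union bound over the connectivity tests—is sound and matches the paper. The gap is in the step you yourself flag as ``the main obstacle'': proving that $v_{cur}$ never repeats.

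Your plan is to identify the walk $s, f(s), f^2(s), \ldots$ with the output of the recursive ``divide at the smallest separating vertex'' construction and then assert that the recursion yields a simple path ``by induction on the allowed vertex set.'' Two things are missing. First, the iterative algorithm always works in the full graph: each outer iteration resets $v_{dest}=t$ and recomputes separators from scratch, so there is no shrinking vertex set to induct on. Even if you write the recursion as $\mathrm{RP}(u,w)=\mathrm{RP}(u,k)\cdot\mathrm{RP}(k,w)$, the crux is that the two sub-paths share only $k$; this does not follow from the obvious inductive hypothesis (for instance, that all internal vertices have ID at least $k$), since both sub-paths are free to use the same large-ID vertices. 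Second, you would still need to show that the iterative algorithm really traverses the leaves of this recursion tree in order—that when $v_{cur}$ advances, the freshly computed chain of separators from $t$ coincides with the previous chain up to its penultimate entry. Neither point is argued.

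The paper handles this step differently and directly. It defines the \emph{destination sequence} of $v$ to be the list of values $v_{dest}$ takes while $v_{cur}=v$, puts a total order on such sequences, and proves that when $v_{cur}$ moves from $u$ to $u'$ the sequence strictly increases: an explicit path-splicing argument shows the first several entries coincide, and the next entry either disappears or exceeds $u'$. Since the destination sequence is a function of the vertex alone, strict increase forces all $v_{cur}$ values to be distinct. Your recursion-tree picture is the right intuition—the destination sequence is exactly the root-to-leaf path in that tree—but the monotonicity argument is what actually closes the proof, and your sketch does not supply it.
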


\begin{proof}
We begin by providing a more detailed description of the implementation of line 7. Then, we will analyze the algorithm in detail. Specifically, we will show that the algorithm returns a path from $s$ to $t$ with high probability, uses logarithmic space, runs in time $\tilde{O}(mn^3)$, and is pseudo-deterministic.\\

\noindent \textbf{Description of line 7:} In order to check if $v_{cur}$ and $v_{dest}$ are connected, we run a random walk on the graph $H$ with vertices $v_{cur}$, $v_{dest}$, and $k+1, k+2, \dots, n$. To do so, in each step of the random walk, if the walk is currently on $v$, we pick a random edge $(v, u)$ adjacent to $v$, and test if the other endpoint $u$ of the edge is in $H$ (this can be done by testing if the ID of $u$ is larger or smaller than $k$). If it is, the random walk proceeds to $u$. Otherwise, the random walk remains at $v$. To analyze the runtime of this walk, we note that such a walk is identical to a random walk on the graph $H$ which is the graph induced by $G$ on the vertices $v_{cur}$, $v_{dest}$, and $k+1, k+2, \dots, n$, along with self loops, where every edge $(v, u)$ where $v \in H$ and $u \notin H$ is replaced by a self loop at $v$. Since this graph has fewer than $n$ vertices, and at most $m$ edges, by Lemma \ref{randconnect} Line 7 takes time $\tilde{O}(mn).$\\

\noindent\textbf{Returns a path from $s$ to $t$ with high probability:} The key claim is that the variable $v_{cur}$ never returns to the same vertex twice, and changes in each iteration of the while loop in line 3. This implies the success of the algorithm since then after at most $n$ iterations of line 3, $v_{cur}$ must have achieved the value of $t$ at some point.

To prove that $v_{cur}$ never returns to the same vertex twice, and changes in each iteration of the while loop, we consider the ``destination sequence'' of the vertex $v = v_{cur}$. We define the destination sequence of $v$ to be the sequence of values the $v_{dest}$ variable takes during the period when $v = v_{cur}$.  That is, the destination sequence of some vertex $v = v_{cur}$ is $(t, c_1, c_2, \ldots, c_i)$ where $c_j$ is the value of $k$ on the $j$th time that the if statement in line 7 evaluated to True. Note that the destination sequence is a function of a vertex (i.e., the sequence of values the $v_{dest}$ variable takes during the period when $v = v_{cur}$ depends only on $v$, assuming line 7 was implemented successfully). It's worth noting that since during the period that $v_{cur} = v$, the value of $k$ only increases, so for a destination sequence $(t, c_1, c_2, \ldots, c_i)$ we have $c_1 < c_2 < \ldots < c_i$.

We note that $c_{j+1}$ is the smallest integer such on the graph $G$ induces on the vertices $v_{cur}, c_{j}$, and $c_{j+1} + 1, c_{j+1} + 2, c_{j+1}, \ldots, n$, the vertex $v_{cur}$ is not connected to $c_{j}$

Consider the following total ordering on sequences. A sequence $C = (t, c_1, c_2, \dots, c_i)$ is larger than $D = (t, d_1, d_2, \dots, d_i, \dots, d_j)$ if either $c_\ell = d_\ell$ for all $1 \le  \ell \le i$ (and $i < j$), or for the first value of $\ell$ for which $c_\ell \neq d_{\ell}$, we have $c_\ell > d_{\ell}$. Otherwise, if $C \neq D$, we say that $C < D.$

We claim that during the algorithm, the destination sequences strictly increase according to the above ordering whenever the value of $v_{cur}$ changes, which happens every time Algorithm \ref{alg:psdSL} returns to Line 3. Note that this would imply that $v_{cur}$ never achieves the same vertex $v$ twice (if it does, that contradicts the fact that the destination sequence of $v_{cur}$ must have increased). Hence, it will suffice to show that the destination sequence of $v_{cur}$ increases according to the above ordering whenever Algorithm \ref{alg:psdSL} returns to Line 3.

Suppose that $u = v_{cur}$. Let the destination sequence of $u$ be $(t, c_1, c_2, \dots, c_i, u')$, where the next value achieved by $v_{cur}$ is $u'$. By construction, we have that $c_1 < c_2 < \dots < c_i < u'.$ Also, let the destination sequence of $u'$ be $(t, d_1, d_2, \dots, d_i, \dots, d_j)$, where we similarly have that $d_1 < d_2 < \dots < d_j.$ We will show that the destination sequence of $u'$ is larger than that of $u$ under the ordering defined above.

First, we claim that $c_\ell = d_\ell$ for $1 \le \ell \le i.$ We can show this by induction. We first show that $c_1 = d_1.$ Indeed, we have that $d_1 \le c_1$ because deleting vertices $1, 2, \dots, c_1$ (which doesn't include $u'$) from the graph will disconnect $u'$ from $t$, as $u$ and $u'$ are adjacent. To show that $d_1 \ge c_1$, we show that there is a path from $u'$ to $t$ that doesn't use any vertices (other than maybe $t$) with labels less than $c_1$. Indeed, deleting vertices $1, 2, \dots, u'$ disconnected $u$ from $c_i$, but deleting $1, 2, \dots, u' - 1$ didn't, so there is a path from $u'$ to $c_i$ that doesn't use any vertices (other than $c_i$) with labels less that $u'.$ Similarly, for any $2 \le j \le i$, deleting vertices $1, 2, \dots, c_j$ disconnected $u$ from $c_{j-1}$, but deleting $1, 2, \dots, c_j - 1$ didn't. Therefore, there is a path from $c_j$ to $c_{j-1}$ that doesn't use any vertices (other than $c_{j-1}$) with labels less than $c_j.$ Finally, there is a path from $c_1$ to $t$ that doesn't use any vertices (other than maybe $t$) with labels less than $c_1$. By merging all these paths together at the endpoints, we get a path from $u'$ to $t$ doesn't use any vertices with labels less than $c_1$, as desired. This implies $c_1 \le d_1$. Combining this with $c_1 \ge d_1$ which we showed above, we now have $c_1 = d_1$. Now, for some integer $p < i$, assume by induction that $c_\ell = d_\ell$ for all $1 \le \ell \le p$. We want to show that $c_{p+1} = d_{p+1}.$ This can be shown using the exact same argument for showing that $c_1 = d_1$, with $t$ replaced by $c_p$. Therefore, we have that $d_{p+1} = c_{p+1}.$ Now, by the same argument again (with $t$ replaced by $c_i$), we can show that either  $d_{i+1} > u'$, or $d_{i+1}$ doesn't exist. The former occurs when $u'$ and $c_i$ aren't adjacent, and the latter happens when they are. Thus, the destination sequence of $u'$ is larger than that of $u$ under the ordering: all the first $i$ entries stay the same, and we either delete the last entry, or make it larger. Thus, $v_{cur}$ never returns to the same vertex. Hence, it must eventually reach $t$, proving that the algorithm returns a path from $s$ to $t$.

To show that the algorithm succeeds with high probability, note that the only lines which are probabilistic are lines 1 and 7. For each execution of these lines, it has failure probability at most $\frac{1}{n^{10}}$ by Lemma \ref{randconnect}. Each of these lines is run at most $n^{3}$ times, so the total failure probability is bounded by $O\left(\frac{1}{n^7}\right)$.\\

\noindent \textbf{Uses $O(\log n)$ space:} At every point in the algorithm, the following is stored: $v_{cur}$, $v_{dest}$, and $k$ (all of which require space $O( \log n )$). In addition, in line 7 the algorithm will run a random walk, which will require storing the id of the current vertex, as well as a counter storing how many steps of the random walk have been executed. Both of these can be stored using logarithmic space.\\

\noindent \textbf{Runs in time $\tilde{O}(mn^3)$:} $v_{cur}$ can take at most $n$ different values and with high probability does not take the same value more than once (see the paragraph above on why the algorithm returns a path with high probability for a proof of this fact). Since $v_{cur}$ changes its value in each iteration of the while loop, line 3 executes at most $n$ times. Line $5$ runs at most $n$ times. Line 6 is checkable in $O(\log n)$ time, and line 7 runs in time $\tilde{O}(mn)$ as shown earlier in the proof (as part of the description of line 7). Therefore, the total runtime is $\tilde{O}(n \times n \times mn) = \tilde{O}(mn^3)$, as desired. \\

\noindent \textbf{Is pseudo-deterministic:} Randomness is only used in lines 6 and 7, and this is only for checking connectivity. Testing connectivity is a pseudo-deterministic protocol since given two vertices, with high probability when testing for connectivity twice, the same result will be output (namely, if the two vertices are connected, with high probability the algorithm will output that they are connected in both runs. If the two vertices are not connected, with high probability the algorithm will output that they are not connected in both runs). Since all uses of randomization is for checking connectivity in a pseudo-deterministic fashion, the algorithm as a whole is pseudo-deterministic.
\end{proof}

\subsection{Eulerian Graphs}
\label{sec:euler}

In this section, we show an efficient pseudo-deterministic log-space algorithm for finding paths in Eulerian graphs (directed graphs such that for every vertex $v$, the indegree and outdegree of $v$ are equal). Recall that in our model of computation, for a vertex $v$, we can query either the degree of $v$, the $i$-th in-neighbor of $v$, or the $i$-th out-neighbor of $v$ in $O(1)$ time. We will prove Theorem \ref{eulerpsd}, repeated below for convenience:

\begin{theorem*}[Connectivity in Eulerian graphs in in $\tilde{O}(m^5n^3)$ time, $O(\log n)$ space]
Given an Eulerian graph $G$ and two vertices $s$ and $t$ where there exists a path from $s$ to $t$, there is a pseudo-deterministic log-space algorithm which outputs a path from $s$ to $t$. Furthermore, the algorithm runs in time $\tilde{O}(m^5n^3)$.
\end{theorem*}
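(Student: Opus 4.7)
The plan is to reuse the entire destination-sequence framework of Algorithm \ref{alg:psdSL} and Lemma \ref{undiralg}, replacing only the connectivity subroutine (line 7) with a directed variant tailored to Eulerian graphs. The outer structure will again maintain $v_{cur}$ and $v_{dest}$, sweep $k = 1, 2, \dots, n$, lower $v_{dest}$ to $k$ whenever vertices $\{1, \dots, k\}$ (excluding $v_{cur}$ and $v_{dest}$) disconnect $v_{cur}$ from $v_{dest}$ in the directed sense, and advance $v_{cur} \leftarrow v_{dest}$ whenever the directed edge $(v_{cur}, v_{dest})$ exists. The key obstacle is that deleting vertices from an Eulerian graph does not in general produce an Eulerian graph, so Lemma \ref{randconnect} does not apply directly to the induced subgraph.

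To handle this I would implement the connectivity test by a random-walk-with-excursions simulation. Let $K := \{v_{cur}, v_{dest}, k+1, \dots, n\}$ be the kept set and $T := V \setminus K$. Simulate a random walk on the original graph $G$ starting at $v_{cur}$, but whenever the walk steps into $T$, continue taking random out-edges inside $T$ until it re-emerges at some vertex of $K$; treat the whole excursion as a single step of an \emph{induced walk} on $K$. Pairing each edge entering $T$ with the edge that eventually exits $T$ preserves in-degree $=$ out-degree at every vertex of $K$, so the induced walk is the random walk on an Eulerian multigraph on $K$ whose reachability agrees with directed reachability in $G$ through intermediaries in $K$. Applying Lemma \ref{randconnect} to this induced walk then decides the required connectivity with high probability in a polynomial number of induced steps.

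With this connectivity test in place, the destination-sequence analysis from Lemma \ref{undiralg} carries over essentially verbatim. The induction that the destination sequence of the new $v_{cur} = u'$ either strictly extends or strictly exceeds that of the old $v_{cur} = u$ only uses (i)~the existence of the directed edge $(u, u')$ that was traversed when $v_{cur}$ advanced, and (ii)~the ability to splice directed paths of the form $u' \to c_i \to c_{i-1} \to \dots \to c_1 \to t$ from the positive connectivity witnesses at earlier values of $k$; both facts remain valid after the switch from undirected to directed, once one works with simple directed subpaths. Consequently $v_{cur}$ never repeats and reaches $t$ within $n$ outer iterations. Pseudo-determinism and $O(\log n)$ space follow from exactly the same observations as in the undirected case: all randomness is confined to the pseudo-deterministic connectivity test, and the only state stored is a constant number of $O(\log n)$-bit counters together with the workspace for one excursion at a time.

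The main obstacle I expect is the runtime analysis, which has to account for the $m^4$ gap between the undirected bound $\tilde{O}(mn^3)$ and the stated $\tilde{O}(m^5n^3)$. A single induced step may internally require up to $\tilde{O}(m^2)$ random-walk steps inside the non-Eulerian region $T$ before exiting, since the only tool for bounding excursion lengths is that a random walk on the full Eulerian graph $G$ hits $K$ quickly in expectation and a high-probability amplification must be layered on top of this. Combining this per-excursion cost with the $\tilde{O}(mn)$ induced steps needed by Lemma \ref{randconnect} on the Eulerian multigraph on $K$, with the $n^2$ connectivity tests across the outer while loop and inner for loop, and with an extra factor for the blow-up in multi-edge count induced by the contraction, should recover the $\tilde{O}(m^5n^3)$ bound. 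I expect this excursion-length bookkeeping to be the main technical burden; the algorithmic skeleton and the correctness proof are close copies of the undirected case.
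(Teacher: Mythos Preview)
Your excursion-based connectivity test does not decide the predicate the algorithm needs. You want to know whether $v_{cur}$ can reach $v_{dest}$ using only vertices of $K=\{v_{cur},v_{dest},k+1,\dots,n\}$, but your simulation lets the walk pass freely through $T=V\setminus K$ during an excursion before re-emerging in $K$. Hence one ``induced step'' from $v\in K$ can land at any $u\in K$ for which there is a directed $v$--$u$ path in $G$ whose internal vertices lie in $T$; iterating, the induced walk reaches $u$ from $v$ exactly when $u$ is reachable from $v$ in the full graph $G$. The test therefore answers ``connected'' at every $k$, $v_{dest}$ is never lowered, and $v_{cur}$ advances only when it already has a direct edge to $t$. (The ``pairing'' you invoke is also not well-defined: which exit edge an excursion takes is random, so there is no fixed Eulerian multigraph on $K$ underlying the induced chain.) If instead you forbid stepping into $T$ and treat those edges as self-loops, you are doing a random walk on the digraph induced on $K$, which is not Eulerian, and you have no bound on its hitting time; this is precisely the obstacle you set out to avoid.

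The paper resolves this by changing \emph{what} is deleted. It fixes a log-space permutation $f$ on $E$ sending each in-edge of a vertex to an out-edge of the same vertex (Lemma~\ref{edgebiject}); the orbits of $f$ are directed cycles $C_1,C_2,\dots$ covering $E$. The outer loop sweeps $k$ over edges, removes the cycles $C_1,\dots,C_k$, and tests connectivity in $E\setminus\{C_1,\dots,C_k\}$, which is still Eulerian, so Lemma~\ref{randconnect} applies directly. When removing $C_k$ disconnects $v_{cur}$ from $v_{dest}$, the algorithm either walks along $C_k$ to a vertex still connected to $v_{dest}$ (if $v_{cur}\in C_k$) and updates $v_{cur}$, or lowers $v_{dest}$ to a vertex of $C_k$ that $v_{cur}$ can still reach. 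The destination sequence becomes a sequence of cycles rather than vertices, and the monotonicity argument goes through as before. The $\tilde O(m^5n^3)$ bound comes from the $O(m^2n)$ per-step overhead of checking whether a candidate edge lies in some $C_i$ with $i\le k$, not from excursion lengths.
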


The algorithm will be a variation on the algorithm for undirected graphs of Subsection \ref{subsec:undirected}. 

First, note that as in the case with undirected graphs, checking connectivity in Eulerian graphs can be done efficiently using a randomized algorithm (see Lemma \ref{randconnect}). We first would like to note that the algorithm for undirected graphs doesn't immediately generalize to the Eulerian case. This is because the algorithm for undirected graphs involves checking for connectivity on the graph $G$ with some vertices (and their adjacent edges) removed. The reason it is possible to check connectivity in this modified graph is that after removing vertices, the resulting graph is still undirected. However, in the Eulerian case, removing vertices along with their edges may result in a non-Eulerian graph.  So, instead of removing vertices from the graph, we instead remove directed cycles in the graph. One of the key observations is that when deleting a cycle, the resulting graph is still Eulerian, so we can apply Lemma \ref{randconnect} to test for connectivity.

At the high level, the algorithm proceeds as follows: we remove directed cycles from the graph and check (using randomization) whether vertex $t$ can be reached from vertex $s$. If it can, we continue removing cycles. If not, then we recursively try to go from vertex $s$ to some vertex on the cycle whose deletion disconnects vertex $s$ and $t$ (call this cycle $C$). After we find a path to some vertex on the cycle $C$, note that there exists a vertex $v$ on $C$ such that deleting $C$ does not disconnect $v$ from $t$. So we then walk on the cycle to $v$, and then recursively apply the algorithm to find a path from $v$ to $t$.

As described, this algorithm will not work in log-space, because it is not clear how the algorithm can store in memory a description of which cycles have been deleted. The following lemma provides us with a way to delete cycles in a specified way, so we can compute in log-space whether an edge is part of a deleted cycle or not.

\begin{lemma}
\label{edgebiject}
Let $E$ be the set of edges of $G$. There exists a log-space-computable permutation $f: E \to E$ that satisfies the following property: if $e$ is an inedge of vertex $v$, then $f(e)$ is an outedge of $v$. In particular, this condition implies for any edge $e$, we have that $e, f(e), f^2(e), \dots$ forms a cycle in $G$.
\end{lemma}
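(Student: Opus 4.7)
My plan is to define $f$ by matching in-edges with out-edges at each vertex, index by index. Since $G$ is Eulerian, every vertex $v$ satisfies $d_{\text{in}}(v)=d_{\text{out}}(v)=:d_v$, so using the oracle's ``$i$-th in-neighbor'' and ``$i$-th out-neighbor'' queries (counting multi-edges with multiplicity) I enumerate the in-edges of $v$ as $e_1^{\text{in}}(v),\ldots,e_{d_v}^{\text{in}}(v)$ and the out-edges as $e_1^{\text{out}}(v),\ldots,e_{d_v}^{\text{out}}(v)$. I then set $f\bigl(e_i^{\text{in}}(v)\bigr) := e_i^{\text{out}}(v)$ for every $v$ and every $i \le d_v$. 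Verifying that $f$ is a permutation of $E$ reduces to noting that each directed edge lies in exactly one in-list and exactly one out-list, so the disjoint union of the local index-matching bijections assembles into a global bijection $E \to E$; the head-to-tail condition demanded by the lemma is immediate from the construction.

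The cycle property is the easiest step: if $e$ has head $v$, then $f(e)$ has tail $v$ by definition, so consecutive edges in $e, f(e), f^2(e), \ldots$ meet end-to-end and therefore trace a directed walk in $G$. Since $f$ permutes the finite set $E$, the orbit of $e$ is finite and in fact closes back up at $e$, yielding a closed walk, which is the relevant notion of ``cycle'' for the Eulerian decomposition used later in Section \ref{sec:euler}.

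The step I expect to require the most care is log-space computability, especially because $G$ may have multi-edges, which complicate converting between edge-representations. I will encode an edge as a pair $(u,j)$ meaning ``the $j$-th out-edge of $u$,'' which fits in $O(\log n)$ bits. To compute $f(u,j)$ I first query the head $v$ of this edge in $O(1)$ time. I then convert the out-index $j$ into the in-index $i$ of the same edge at $v$ in two sweeps: first I scan $j'=1,\ldots,j$ and count how many out-neighbors of $u$ equal $v$, obtaining a multiplicity $m$; next I scan the in-neighbors of $v$ and record the position of the $m$-th occurrence of $u$, which is the desired in-index $i$. The output is then $(v,i)$, which by definition is the $i$-th out-edge of $v$ and hence equals $f(u,j)$. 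Each of these subroutines maintains only a constant number of $O(\log n)$-bit counters and vertex labels, so $f$ is indeed log-space computable, completing the proof.
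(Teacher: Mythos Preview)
Your construction is exactly the paper's: at each vertex $v$ you list the in-edges and out-edges by the oracle's (lexicographic) indexing and set $f(e^{\text{in}}_i(v))=e^{\text{out}}_i(v)$, then observe that these local bijections glue to a permutation with the head-to-tail property. You supply considerably more detail than the paper does---in particular the multiplicity-counting trick for converting an out-index at $u$ into an in-index at $v$ in the presence of multi-edges, and the explicit orbit argument for the cycle claim---but the approach is the same.
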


\begin{proof}
Take a vertex $v$ of indegree $d$, and take some ordering of its inedges $e^\text{in}_1, e^\text{in}_2, \dots, e^\text{in}_d$ (say, in lexicographic order), and some ordering of the outedges $e^\text{out}_1, e^\text{out}_2, \dots, e^\text{out}_d$ (say again, in lexicographic order). Then simply set $f(e^\text{in}_i) = e^\text{out}_i$.
\end{proof}

Note that in our model, computing $f(e)$ takes $O(n)$ time for each edge $e$. In particular, for a cycle $C_i$ formed by repeatedly applying $f$ to some edges, and an edge $e$ in the cycle, computing the next edge in the cycle takes time $O(n)$.

The importance of the lemma is that it provides us with a way to delete cycles in some order: we begin from the ``smallest'' edge (in whatever ordering) and delete the cycle associated with that edge. Then, we pick the second smallest edge, and delete its cycle, etc. When executing this algorithm, we may try to delete a cycle multiple times, since multiple edges correspond to the same cycle, but this will not be an issue.

See Algorithm \ref{alg:euler} for a precise description of the algorithm. As in the undirected case, we use the variables $v_{cur}$ and $v_{dest}$ to denote the vertex our walk is currently on and the current destination. We let the set of $e_i$ be the edges in $G$ (we denote the set of edges of $G$ as $E$), and let $C_i$ be the cycle $(e_i, f(e_i), f^2(e_i), \dots, e_i)$ in $G$. We note that it is possible for $C_i$ and $C_j$ to have the same set of edges for $i \neq j$ (this will not affect the correctness of the algorithm).

\begin{algorithm}
\caption{Pseudo-deterministic log-space algorithm for connectivity in Eulerian digraphs.}
\begin{algorithmic}[1]

%\Statex

\State Set $v_{cur} = s.$ Write $v_{cur}$ on the output tape.
\While{$v_{cur} \neq t$}
	\State Set $v_{dest} = t$.
    \For{$k = 1 \dots m$}
    	\If{$v_{cur}$ and $v_{dest}$ are not connected using only edges in $E \setminus \{C_1, \dots, C_k\}$ (more details of the implementation of this step are in the body of the paper in the proof of Lemma \ref{euleralg})}
        	\If{$v_{cur} \in C_k$}
            	\State Find a vertex $v \in C_k$ such that $v$ and $v_{dest}$ are connected using edges only in $E \setminus \{C_1, \dots, C_k\}$.
                \State Walk on $C_k$ from $v_{cur}$ until you reach $v$, print the vertices on this path (on the output tape).
                \State Set $v_{cur} \leftarrow v$, return to top of while loop.
            \Else
            	\State Find a vertex $v \in C_k$ such that $v_{cur}$ can get to $v$ in $G \setminus \{C_1, \dots, C_k\}$
            	\State Set $v_{dest} \leftarrow v.$
            \EndIf
        \EndIf
    \EndFor
\EndWhile

\end{algorithmic}
\label{alg:euler}
\end{algorithm}

\begin{lemma}
\label{euleralg}
Algorithm \ref{alg:euler} runs in time $\tilde{O}(m^5n^3)$, is pseudo-deterministic, uses logarithmic space, and outputs a path from $s$ to 
$t$ with high probability.
\end{lemma}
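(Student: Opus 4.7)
The plan is to follow the four-part structure of Lemma \ref{undiralg}, adapted to the Eulerian setting. The key new ingredient is that removing a directed cycle from an Eulerian graph leaves it Eulerian (each vertex on $C_i$ loses one inedge and one outedge), so Lemma \ref{randconnect} remains applicable to the subgraph $G' = (V, E \setminus \{C_1,\dots,C_k\})$ at line 5. To implement line 5 in logarithmic space, I would simulate a random walk of length $\tilde{O}(mn)$ on $G'$: at each step, sample a uniformly random outedge $e$ of the current vertex and test whether $e$ lies in some $C_1,\dots,C_k$ by iterating $i = 1, \dots, k$, tracing $C_i$ via repeated applications of $f$ from $e_i$ (each costing $O(n)$ by Lemma \ref{edgebiject}), and comparing with $e$; take a self-loop if $e$ is in a deleted cycle. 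Lines 7--9 and 11 use the same connectivity subroutine, scanning the vertices of $C_k$ one at a time.

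The crux is correctness, which I would prove by a cycle-index variant of the destination-sequence argument from Lemma \ref{undiralg}. For each value $u$ attained by $v_{cur}$, define its \emph{destination sequence} to be $(t, c_1, c_2, \ldots, c_i)$, where $c_j$ is the value of $k$ at the $j$-th firing of the if at line 5 during the while-iteration in which $v_{cur} = u$; by construction $c_1 < c_2 < \cdots < c_i$. The main claim is that when $v_{cur}$ transitions from $u$ to $u'$ via lines 7--9 (so $u, u' \in C_{c_i}$ and $u'$ is connected to the current $v_{dest}$ in $G'$), the destination sequence of $u'$ strictly exceeds that of $u$ in lex order. The first $i$ entries coincide by induction: at cycle index $c_j$, deleting $C_1, \ldots, C_{c_j}$ disconnects $u$ from $c_{j-1}$, and the undeleted portion of $C_{c_i}$ remains present (so $u$ and $u'$ are in the same connected component of $G'$ at each earlier stage), hence $u'$ is also disconnected from $c_{j-1}$ at index $c_j$ but not before. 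The $(i+1)$-st entry either strictly increases (because $u'$ is connected to the old $v_{dest}$ in $E \setminus \{C_1,\ldots,C_{c_i}\}$ by choice in line 7) or disappears. Thus no vertex recurs as $v_{cur}$, and the while loop terminates in at most $n$ iterations with $v_{cur} = t$. A union bound over the polynomially many calls to Lemma \ref{randconnect} (each failing with probability at most $1/n^{10}$) bounds the overall failure probability.

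The remaining three claims follow quickly. The state consists of $v_{cur}$, $v_{dest}$, the counter $k$, the cycle-trace index $i$ and current edge cursor, and the $O(\log n)$-space random-walk simulator from Lemma \ref{randconnect}, all of which fit in $O(\log n)$ bits. Pseudo-determinism follows because every randomized subroutine is a call to Lemma \ref{randconnect}, whose yes/no output is concentrated on the correct value; after amplifying each call so that its failure probability is inverse-polynomial in $mn$, the entire execution becomes a deterministic function of the input with high probability. For the runtime, the while loop runs at most $n$ times, its body executes the for loop $m$ times with one connectivity check of cost $\tilde{O}(mn) \cdot O(m^2 n) = \tilde{O}(m^3 n^2)$ per iteration, and lines 7--9 contribute an additional $O(m)$ factor from scanning $C_k$ and walking along it; multiplying, with the logarithmic overhead for amplification, yields the stated $\tilde{O}(m^5 n^3)$ bound.

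The step I expect to be the main obstacle is the inductive equality of the first $i$ entries of the two destination sequences. Unlike the vertex-deletion picture of the undirected case, cycles here may share edges, and the same underlying cycle can be indexed by many distinct starting edges $e_i$, so one has to argue carefully that ``smallest cycle index whose deletion disconnects a given pair'' is well-behaved under the lex order induced by our enumeration of $E$.
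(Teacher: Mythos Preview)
Your approach matches the paper's almost exactly: the same four-part decomposition, the same destination-sequence monotonicity argument, and the same runtime bookkeeping. The paper's proof of the inductive step is in fact terser than yours --- it simply asserts that $u$ and $u'$ remain connected via the cycle $C_{i_k}$ at every earlier stage and concludes that the first $k-1$ indices agree.

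Your flagged obstacle is half misplaced and half easily dispatched. The cycles $C_i$ can never share edges: $f$ in Lemma~\ref{edgebiject} is a permutation of $E$, so its orbits \emph{partition} the edge set, and any two $C_i, C_j$ are either identical (same orbit) or edge-disjoint. The only genuine issue is the multiplicity you mention --- the same orbit may appear as $C_j$ for several indices $j$. But this cannot happen for the final index $c_i$: if the orbit $C_{c_i}$ coincided with $C_j$ for some $j < c_i$, then $E \setminus \{C_1,\dots,C_{c_i}\} = E \setminus \{C_1,\dots,C_{c_i-1}\}$, so the connectivity test at $k = c_i$ would return the same answer as at $k = c_i - 1$. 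Whether or not $c_i - 1$ is itself in the destination sequence, one checks (using that after line~11 the new $v_{dest}$ is reachable from $u$) that line~5 would then \emph{not} fire at $c_i$, contradicting that $c_i$ is in the sequence. Hence all edges of $C_{c_i}$ are present in $E \setminus \{C_1,\dots,C_j\}$ for every $j < c_i$, so $u$ and $u'$ lie in the same component at every such step, their $v_{dest}$ updates in line~11 coincide, and your induction goes through verbatim. With this observation the ``main obstacle'' you anticipate disappears, and the proof is complete along the lines you sketched.
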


\begin{proof} We first give a more detailed description of the implementation of line 5. Then, we analyze the algorithm in steps, first showing that it returns a path with high probability, and then showing that it runs in pseudo-deterministic log-space with runtime $\tilde{O}(m^5n^3)$. The proof closely follows the approach of the proof in the undirected case.\\

\noindent \textbf{Implementation of line 5:} Below, we describe the details of the implementation of step 5. As done in the proof of Lemma \ref{randconnect} in Appendix \ref{app:randconnect}, we check connectivity by making every edge in the graph $G$ undirected, and then perform a random walk on the undirected graph $G$. The key difficulty is to ensure that we can check if an edge is in one of the deleted cycles efficiently in log-space.

Say that we are on vertex $u$, and the randomly chosen neighbor which is next in the random walk is $v$. Let $e$ be the edge between $u$ and $v$. We wish to check whether $e$ is in any of the cycles $C_1, C_2, \dots, C_k$. To check whether edge $e$ is on the cycle $C_i$, we can  check whether $e$ is any of the edges $e_i, f(e_i), f^2(e_i), \dots e_i$, which can all be computed in log-space. To see if $e$ is on any of the cycles $C_1, \dots, C_k$, we check if $e$ is in each of the $C_i$. Each such check takes time $O(mn)$ (since the cycle is of length $O(m)$, and given an edge $e$, computing the next edge in the cycle takes time $O(n)$). Hence, in total it takes time $O(kmn)$.

Now, if $e$ is on one of the cycles, the random walk stays at $u$, and otherwise the random walk proceeds to $v$. It is clear that this is equivalent to taking a random walk on the graph $G'$, where $G'$ is the graph $G$ but with all edges $(u', v')$ in at least one of the cycles $C_1, \dots, C_k$ replaced with a self-loop at $u'$ (since, if such an edge is chosen, the random walk stays at $u'$. As $G'$ still has at most $m$ edges and $n$ vertices, checking connectivity takes $\tilde{O}(mn)$ time by Lemma \ref{randconnect}.\\

\noindent \textbf{Returns a path with high probability:} As in the undirected case, the main claim is that $v_{cur}$ never repeats a vertex on two different iterations of the while loop of line 2. To prove that $v_{cur}$ is never repeated, we will use the notion of the ``destination sequence'', similar to the undirected case. We note that our definition here of a destination sequence is different from the definition in the undirected case. We say that the associated destination sequence to $v_{cur}$ is the sequence of cycles $(C_{i_1}, C_{i_2}, \dots, C_{i_k})$, where we add $C_{i_j}$ to the sequence if the if statement of step 5 of the algorithm was true when $k = i_j$. Note that this implies that $i_1 \le i_2 \le \dots \le i_k.$

Now, as in the proof of Lemma \ref{undiralg}, we give a total ordering on all destination sequences. Consider two destination sequences $i = (C_{i_1}, C_{i_2}, \dots, C_{i_k})$ and $j = (C_{j_1}, C_{j_2}, \dots, C_{j_k}, \dots, C_{j_m}).$ Say that $i$ is greater than $j$ if either $i_\ell = j_\ell$ for all $1 \le \ell \le k$ (and $k < m$), or for the smallest value of $\ell$ such that $i_\ell \neq j_\ell$, we have that $i_\ell > j_\ell.$ Otherwise, if $i \neq j$, then say that $i < j$.

Now, we proceed to prove that $v_{cur}$ never repeats a value. Say that $v_{cur}$ is set to $v_{cur}'$ after one loop of line 2. Let the destination sequence of $v_{cur}$ be $i = (C_{i_1}, C_{i_2}, \dots, C_{i_k})$ and let the corresponding destination sequence of $v_{cur}'$ be $(C_{j_1}, C_{j_2}, \dots, C_{j_m})$. First, we claim that $i_p = j_p$ for all $1 \le p \le k-1.$ This is because $v_{cur}$ is connected to $v_{cur}'$ via cycle $C_{i_k}$, so deleting $C_{i_p}$ disconnects $v_{cur}$ and $v_{dest}$ if and only if it disconnects $v_{cur}'$ and $v_{dest}.$ As in the proof of Lemma \ref{undiralg}, we have two situations now. One case is that $v_{cur}' \in C_{i_{k-1}}$, and therefore, $m = k-1$ (the destination sequence for $v_{cur}'$ is one shorter than that of $v_{cur}$). The other is that $v_{cur}' \not\in C_{i_{k-1}}$, and therefore, $j_k \ge i_k$, as deleting $C_{i_k}$ doesn't disconnect $v_{cur}'$ and $v_{dest}$ by the condition of line $7$ of the algorithm. So the destination sequence of $v_{cur}'$ is greater than that of $v_{cur}$ under the total ordering described above, which implies that $v_{cur}$ can never repeat a vertex.

To see that the algorithm succeeds with high probability, note that the only randomness is in lines 5, 7, and 11 for checking connectivity between two vertices. We will check connectivity at most $O(nm^2)$ times, so the failure probability is bounded by $\frac{nm^2}{n^{10}}$ by Lemma \ref{randconnect}, as desired.\\

\noindent \textbf{Uses $O(\log n)$ space:} The information our algorithm needs to store is: $s, t, v_{cur}, v_{dest}$, and $k$. After that, by Lemma \ref{edgebiject}, we can compute whether an edge $e$ is part of a cycle $C_k$ is log-space, and testing whether two vertices are connected in an Eulerian graph can be done in (randomized) log-space by Lemma \ref{randconnect}. Therefore, everything can be implemented in log-space. \\

\noindent \textbf{Runs in time $\tilde{O}(m^5n^3)$:} Line 2 repeats at most $n$ times since $v_{cur}$ never repeats, and Line 4 repeats at most $m$ times since there are $m$ possible values for $k$. Due to Lemma \ref{randconnect}, each execution of Line 5 takes $\tilde{O}(mn) \times O(m^2 n)$ time, where the $O(m^2n)$ comes from having to check whether each edge we try to use comes from one of the cycles $C_1, \dots, C_k$ (a factor $m$ from the fact that there are up to $m$ cycles $C_i$, a factor $m$ from the fact that the size of each $C_i$ is at most $m$, and a factor $O(n)$ because given some edge $e$ in $C_i$, computing the next edge in the cycle takes time $O(n)$).  Line 7 and 11 take time $O(m) \times \tilde{O}(mn) \times O(m^2n)$, for the same reason as above, except with the extra $O(m)$ factor for having to check all vertices on the cycle $C_k.$ Therefore, our runtime bound is $O(n) \times O(m) \times (\tilde{O}(mn) \times O(m^2) + O(m) \times \tilde{O}(mn) \times O(m^2n)) = \tilde{O}(m^5n^3)$, as desired. \\

\noindent \textbf{Is pseudo-deterministic:} The only randomness is used to check connectivity between pairs of vertices. As each of these checks succeeds with high probability, this clearly implies that our randomness used will not affect the output of the algorithm, since if the two vertices tested are connected, with high probability the same result (of ``accept") will be outputted, and if the two vertices tested are not connected, with high probability the same result (of ``reject") will be outputted.
\end{proof}

\section{Discussion}
\label{discussion}
The main problem left open is that of search-RL vs pseudo-deterministic-L:
\begin{prob} \label{rlpsdl} Can every problem in search-RL be solved pseudo-deterministically in RL?\end{prob}

A notable open problem in complexity is whether $NL$ equals $UL$. It is known that under randomized reductions, with \textit{two way access} to the random bits, $NL$ is reducible to $UL$ (in fact, it is reducible to $UL \cap coUL$) \cite{RA}. It is not known whether $NL$ is reducible to $UL$ when given one-way access to the random bits. A reproducible reduction from $NL$ to $UL$ would imply such a result, giving us the following problem:

\begin{prob} Does there exist a reproducible log-space reduction from NL to UL?
\end{prob}

Another interesting problem would be to fully derandomize the pseudo-deterministic algorithms we present for undirected and Eulerian connectivity, in order to get deterministic log-space algorithms which work in low polynomial time.
\begin{prob}
Does there exists a deterministic log-space algorithm for undirected connectivity (or connectivity in Eulerian graphs) using low time complexity?
\end{prob}

There are several natural extensions of the notion of reproducibility to the time-bounded setting, some which may be worth exploring. A noteworthy extension is that of low-entropy output algorithms. Our algorithm for search-RL has the property that its output, when viewed as a random variable depending on the random choices of the algorithm, has $O(\log n)$ entropy. It may be interesting to understand such algorithms in the context of time-bounded computation.
\begin{prob}
Let search-BPP($\log n$) be the set of problems solvable by randomized polynomial time machines, whose outputs (when viewed as random variables depending on the random choices of the algorithms) have $O(\log n)$ entropy. What is relationship between search-BPP($\log n$) and search-BPP? What is the relationship between search-BPP($\log n$) and pseudo-deterministic-BPP?
\end{prob}

\section*{Acknowledgments}
Thanks to Shafi Goldwasser for discussions, and for many helpful comments on older versions of the paper. Many thanks to Omer Reingold for valuable discussions, especially those leading up to Subsection \ref{sec:euler} on Eulerian graphs. Thanks to Uri Feige, Dhiraj Holden, and Aleksander Madry for discussions. 
\bibliographystyle{plain}
\bibliography{bibfile}

\appendix

\section{Testing Connectivity for Undirected and Eulerian graphs in RL}\label{app:randconnect}
In this section, we prove Lemma \ref{randconnect}, repeated below for convenience:
\begin{lemma*}
Given an undirected or Eulerian graph $G$ and two vertices $s$ and $t$, there exists a randomized algorithm running in time $\tilde{O}(mn)$ that checks whether there exists a path from $s$ to $t$, and succeeds with probability $1 - \frac{1}{n^{10}}.$
\end{lemma*}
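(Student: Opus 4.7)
The plan is to rely on random walks and the classical cover-time bound of Aleliunas, Karp, Lipton, Lovász, and Rackoff (as used in \cite{undirectedRL}): on a connected undirected multigraph with $n$ vertices and $m$ edges, a simple random walk from any starting vertex visits every vertex within $2m(n-1)$ steps in expectation. For the undirected case, I would start a simple random walk at $s$, take $\Theta(mn)$ steps, and output ``connected'' iff the walk ever visits $t$; by Markov's inequality applied to the AKLLR bound, $t$ is visited with probability at least, say, $3/4$ whenever $s$ and $t$ lie in the same connected component, and is never visited otherwise. Repeating $\Theta(\log n)$ independent trials and taking the OR boosts the success probability to $1 - 1/n^{10}$ while keeping the total running time at $\tilde{O}(mn)$. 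Only $O(\log n)$ space is needed at any time: the current vertex of the walk, a step counter, and a trial counter.

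For the Eulerian case, I would first establish the structural fact that weak and strong connectivity coincide in directed Eulerian graphs, i.e., $s$ can reach $t$ if and only if $s$ and $t$ lie in the same weakly connected component. The argument is a short flow computation on the condensation: for any strongly connected component $A$, summing the Eulerian condition (in-degree equals out-degree) over the vertices $v \in A$ shows that the number of directed edges entering $A$ from outside equals the number leaving $A$. Hence any source SCC in the condensation has zero incoming, and therefore zero outgoing, inter-SCC edges, so it is an isolated node in the condensation; iterating this removal shows the condensation has no inter-SCC edges at all.

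Given this reduction, the Eulerian algorithm simply forgets edge orientations and runs the undirected algorithm on the underlying undirected multigraph, which still has exactly $m$ edges and $n$ vertices. Correctness then follows from the weak/strong equivalence just established, and the time and space bounds carry over verbatim. The main conceptual step, rather than a genuine technical obstacle, is this structural observation about Eulerian graphs; once that is in hand the probabilistic analysis is a textbook application of the AKLLR cover-time bound together with Markov's inequality and standard independent-trial amplification.
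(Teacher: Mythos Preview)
Your proposal is correct and follows essentially the same approach as the paper: reduce the Eulerian case to the undirected case by a structural argument showing that directed reachability coincides with reachability in the underlying undirected graph, then apply the AKLLR hitting/cover-time bound with Markov's inequality and $O(\log n)$ independent repetitions. The only cosmetic difference is that the paper phrases the structural step as ``if there is an edge $u\to v$ then there is a path $v\to u$'' via a reachability-set cut argument, whereas you phrase it as ``the condensation has no inter-SCC edges'' via the same in/out balance on an SCC; these are the same observation in different clothing.
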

\begin{proof}
We begin by showing that for an Eulerian graph $G$, if there is an edge from vertex $u$ to vertex $v$, then there is also a path from vertex $v$ to vertex $u$. Let $V_v$ be the set of vertices reachable from $v$. Note that the number of edges incoming to $V_v$ must be the same as the number of edges going out of $V_v$. However, by the definition of $V_v$, there cannot be edges leaving the set (if there is a an edge $(v', u')$ where $v' \in V_v$ and $u' \notin V_v$, then $u'$ can be reached from $v$, and hence $u' \in V_v$, a contradiction). Hence, since there are no outgoing edges, there are also no incoming edges. Hence, since $(u, v)$ has one endpoint in $V_v$, both endpoints must be in $V_v$, so $u \in V_v$ is reachable from $v$.

Hence, in order to test reachability in Eulerian graphs, it is enough to test reachability in the undirected graph defined by making all edges of the Eulerian graph undirected. Hence, it suffices to prove the lemma for undirected graphs.

The expected number of steps needed to get to vertex $t$ after starting a random walk at vertex $s$ where there is a path from $s$ to $t$ is bounded by $2mn$ \cite{undirectedRL}. By Markov's inequality, the probability that a random walk of length $4mn$ starting at vertex $s$ doesn't reach vertex $t$ is at most $\frac{1}{2}.$ Therefore, starting at vertex $s$ and repeating $O(\log n)$ random walks of length $4mn$ provides the result. It is easy to check that in our model, taking one step in a random walk takes time $O(\log n).$
\end{proof}

\section{\textsc{Short-Walk Find Path} is complete for search-RL}\label{complete}

In this section, we prove Lemma \ref{completesearchrl}, which states that \textsc{Short-Walk Find Path} is complete for search-RL. We repeat the definition of \textsc{Short-Walk Find Path} below for convenience, and then we proceed to prove that it is complete for search-RL.

\begin{definition}[\textsc{Short-Walk Find Path}]
Let $R$ be the search problem whose valid inputs are $x = (G, s, t, 1^k)$ where $G$ is a directed graph, $s$ and $t$ are two vertices of $G$, and a random walk of length $k$ from $s$ reaches $t$ with probability at least $1 - 1/|x|$. On such an $x$, a valid output is a path of length up to $\poly(k)$ from $s$ to $t$.
\end{definition}

We now prove that \textsc{Short-Walk Find Path} is complete for search-RL. The definition of reductions in the context of search-RL is given in \cite{RTV}.

\begin{proof}
First, it is easy to see that \textsc{Short-Walk Find Path} is in search-RL, as we can just take a random walk starting from $s$ of length $k$.

Now we show that \textsc{Short-Walk Find Path} is search-RL-hard via a reduction from \textsc{Poly-Mixing Find Path}. In \cite{RTV} Section A.3 (proof of Theorem 3.1), it is shown that \textsc{Poly-Mixing Find Path} with input $(G, s, t, 1^k)$ is complete for search-RL. They also state that a path of length $m = 2k \log k$ starting from $s$ reaches $t$ with probability at least $\frac{1}{2k}$ in the problem \textsc{Poly-Mixing Find Path}. Now, we amplify this probability of $\frac{1}{2k}$ by constructing a new graph. To do this, consider a graph $G'$ which is made as follows: it has $(m+1)|V(G)|$ vertices, each of which is a pair $(i, v)$ for $0 \le i \le m$ and vertex $v \in G.$ If the edge $u \to v$ is in $G$ then add edges $(i, u) \to (i+1, v)$ for $0 \le i \le m-1$ in $G'.$ Finally, create edges $(m, v) \to (0, s)$ for all $v \neq t$, and add only a self-loop to the vertex $(m, t)$ (so if a random walk reaches $(m, t)$, the random walk will stay there forever). Then, it is easy to see that a random walk of length $\ell = m + 2(m+1) k \log x$ starting at $(0, s)$ will end at $(m, t)$ with probability at least $1 - \left(1 - \frac{1}{2k} \right)^{2k \log x} \ge 1 - \frac{1}{x}.$ Choosing $x$ larger than the length of the input gives the desired reduction. That is, when choosing such an $x$, given a solution to the \textsc{Short-Walk Find Path}, we can output a polynomially long list $y_1, y_2, \ldots, y_p$ such that at least one of the $y_i$ is a solution to the \textsc{Poly-Mixing Find Path} instance. Therefore, \textsc{Short-Walk Find Path} is complete for search-RL.
\end{proof}

\end{document}